    \renewcommand*{\bm}[1]{#1}%
    \renewcommand*{\bm}[1]{#1}%
\tikzstyle{every picture}+=[remember picture]
\tikzstyle{na} = [baseline=-.5ex]
\newcommand{\vast}{\bBigg@{1}}
\newcommand{\Vast}{\bBigg@{5}}
\numberwithin{equation}{section}
\newcommand{\ie}{\textit{i.e.}}
\numberwithin{equation}{section}
\newcommand{\nn}{\nonumber}
\newcommand{\be}{\begin{eqnarray}}\newcommand{\ee}{\end{eqnarray}}
\newcommand{\bea}{\begin{equation} \begin{aligned}} \newcommand{\eea}{\end{aligned} \end{equation}}
\def\U{\mathrm{U}}
\def\SU{\mathrm{SU}}
\def\SL{\mathrm{SL}}
\newcommand{\wb}{\overline}
\newcommand{\wt}{\widetilde}
\DeclareMathOperator{\Tr}{Tr}
\DeclareMathOperator{\sign}{sign}
\DeclareMathOperator{\re}{\mathbb{R}e}
\DeclareMathOperator{\im}{\mathbb{I}m}
\newcommand{\cA}{\mathcal{A}}
\newcommand{\cB}{\mathcal{B}}
\newcommand{\cC}{\mathcal{C}}
\newcommand{\cN}{\mathcal{N}}
\newcommand{\cO}{\mathcal{O}}
\newcommand{\cP}{\mathcal{P}}
\newcommand{\cV}{\mathcal{V}}
\newcommand{\cW}{\mathcal{W}}
\newcommand{\bZ}{\mathbb{Z}}
\newcommand{\fh}{\mathfrak{h}}
\newcommand{\fm}{\mathfrak{m}}
\newcommand{\fn}{\mathfrak{n}}
\newcommand{\fp}{\mathfrak{p}}
\newcommand{\fR}{\mathfrak{R}}
\newcommand{\tv}{\tilde{v}}
\newcommand{\tu}{\tilde{u}}
\newcommand{\tx}{\tilde{x}}
\DeclareMathAlphabet{\mathcalligra}{T1}{calligra}{m}{n}
\newtheorem{theorem}{Theorem}%[section]
\title{The Cardy limit of the topologically twisted index and black strings in AdS$_5$}
\author{Seyed Morteza Hosseini,}
\author{Anton Nedelin}
\author{and Alberto Zaffaroni}
\affiliation{Dipartimento di Fisica, Universit\`a di Milano-Bicocca, I-20126 Milano, Italy}
\affiliation{INFN, sezione di Milano-Bicocca, I-20126 Milano, Italy}
\emailAdd{morteza.hosseini@mib.infn.it}
\emailAdd{anton.nedelin@unimib.it}
\emailAdd{alberto.zaffaroni@mib.infn.it}
\abstract{We evaluate the topologically twisted index of a general four-dimensional $\mathcal{N} = 1$ gauge theory in the ``high-temperature" limit. The index is the partition function for $\mathcal{N} = 1$ theories on $S^2 \times T^2$, with a partial topological twist along $S^2$, in the presence of background magnetic fluxes  and fugacities for the global symmetries. We show that the logarithm of the index 
is proportional to the conformal anomaly coefficient of the two-dimensional $\mathcal{N} = (0,2)$ SCFTs obtained from the compactification  on $S^2$. We also present a universal formula  for extracting the index from the four-dimensional conformal anomaly coefficient and its derivatives. We give examples based on theories whose holographic duals are black strings in type IIB backgrounds AdS$_5 \times \text{SE}_5$, where SE$_5$ are five-dimensional Sasaki-Einstein spaces.
}
\begin{document}

\setcounter{tocdepth}{3}
\maketitle

%*******************************************************************************
%
%     Main body
%
%*******************************************************************************

\date{Dated: \today}

% \hypersetup{
% colorlinks,breaklinks,
%             linkcolor=black
% }

% \tableofcontents

% \hypersetup{
% colorlinks,breaklinks,
%             linkcolor=[rgb]{0,0,0.7}
% }

\section{Introduction} 

The topologically twisted index introduced in \cite{Benini:2015noa} is the partition function for three- and four-dimensional gauge theories with at least four supercharges on $\Sigma_g\times T^d$, where $d=1,2$, with a topological $A$-twist on $\Sigma_g$.
When it is refined with chemical potentials and background magnetic charges for the flavor symmetries, it becomes an efficient tool for studying the nonperturbative properties of supersymmetric gauge theories \cite{Closset:2013sxa,Benini:2015noa,Honda:2015yha,Closset:2015rna,Benini:2016hjo,Closset:2016arn}. The large $N$ limit of the index contains interesting information about theories with a holographic dual. In particular, the large $N$ limit of the index  for  the three-dimensional ABJM theory was successfully used in  \cite{Benini:2015eyy,Benini:2016rke} to provide the first microscopic counting of the microstates
of an AdS$_4$ black hole. The large $N$ limit of general three-dimensional quivers with an AdS dual was studied in \cite{Hosseini:2016tor,Hosseini:2016ume}. In this paper
we study the asymptotic  behavior of the index, at finite $N$, for four-dimensional $\cN =1$ gauge theories.

With an eye on holography we also evaluate the index in the large $N$ limit.
We focus, in particular, on the class of $\cN =1$ theories  arising from D3-branes probing Calabi-Yau singularities, which have a well-known holographic dual in terms of
compactifications on Sasaki-Einstein manifolds. Black string solutions corresponding to D3-branes at a Calabi-Yau singularity 
have been recently studied in details  in \cite{Benini:2012cz,Benini:2013cda,Benini:2015bwz}.
They  interpolate between AdS$_5$ and AdS$_3\times \Sigma_g$ vacua and can be interpreted as an RG flow from an UV four-dimensional $\cN=1$ CFT  and an IR two-dimensional $(0,2)$ one.
The two-dimensional CFT is obtained by compactifying the four-dimensional theory on $\Sigma_g$  with a topological twist parameterized by a set of background magnetic charges $\fn_I$.
The right-moving central charge of the two-dimensional CFT has  been computed  in  \cite{Benini:2012cz,Benini:2013cda,Bobev:2014jva,Benini:2015bwz}, and successfully compared with the supergravity result for a variety of models.  

The topologically twisted index of a general four-dimensional $\cN =1$ gauge theory can be interpreted as a trace over a Hilbert space of states on $\Sigma_g \times S^1$
\be
\label{trace}
Z (\fn, y) =  {\rm Tr}_{\Sigma_g \times S^1}  (-1)^F q^{H_L} \prod_I y_I^{J_I} \, ,
\ee
where $q=e^{2\pi i\tau}$ and $y_I$ are fugacities for the flavor symmetries $J_I$.
Here, $\tau$ is the complex modulus of $T^2$.
The Hamiltonian $H_L$ on $\Sigma_g \times S^1$ explicitly depends on the background magnetic fluxes $\fn_I$.
For simplicity, we restrict to the case of $\Sigma_g=S^2$, since the generalization to an arbitrary Riemann surface is straightforward \cite{Benini:2016hjo}.
The index can be evaluated using supersymmetric  localization and it reduces to a matrix model. It can be written as the contour integral, 
\begin{equation}
Z (\fn, y) = \frac1{|\cW|} \; \sum_{\fm \,\in\, \Gamma_\fh} \; \oint_\cC Z_{\text{int}} (\fm, x;  \fn , y) \, ,
\end{equation}
of a meromorphic differential form  in variables $x$ living on the torus $T^2$ and parameterizing the Cartan subgroup of the gauge group. 
An important feature of the matrix model is that there is a sum over
the lattice of magnetic charges $\fm$ of the gauge group.  For each $\fm$ the integrand has the form of an elliptic genus as computed in  \cite{Benini:2013xpa,Benini:2013nda}.
There exist particular choices of background magnetic fluxes $\fn$ for which the sum  truncates to a single set of  gauge fluxes $\fm$ \cite{Gadde:2015wta}.  
However, for generic background fluxes this does not happen and we need to sum an infinite number of contributions. The strategy is then  to explicitly resum the integrand \cite{Benini:2015eyy} and consider the contour integral of  
\begin{equation}
Z_{\text{resummed}}  ( x;  \fn , y) = \frac1{|\cW|} \; \sum_{\fm \,\in\, \Gamma_\fh}  Z_{\text{int}} (\fm, x;  \fn , y) \, ,
\end{equation}
which is a complicated elliptic function of $x$. One can write a set of algebraic equations for the position of the poles,
which we call {\it Bethe ansatz equations} (BAEs) (they actually are the BAEs of the dimensionally reduced theory on $\Sigma_g$ in the formalism of \cite{Nekrasov:2014xaa}),
and a {\it Bethe potential}   $\cV$ (or Yang-Yang functional \cite{Yang1969}) whose derivatives reproduce the BAEs. The topologically twisted index is then given  by
the sum of the residues of $Z_{\text{resummed}}$ at the solutions to the BAEs. 
The explicit evaluation of the topologically twisted index is a hard task, even in the large $N$ limit.
However, the index greatly simplifies if we identify  the modulus $\tau=i \beta/2\pi$ of the torus $T^2$ with a \emph{fictitious} inverse temperature $\beta$, and take the \emph{high-temperature} limit $(\beta \to 0)$. In this limit, we can use the modular properties of the integrand under the $\SL(2,\bZ)$ action to simplify the result.

In the high-temperature limit, we find a number of  interesting results, valid to leading order in $1 / \beta$.

First, we obtain an explicit relation between the Bethe potential and the R-symmetry 't\,Hooft anomalies of the UV four-dimensional $\cN=1$ theory
\bea\label{tHoof:anomaly:0}
\overline\cV \left( \Delta_I \right)= \frac{\pi^3}{6 \beta} \left[ \Tr R^3 ( \Delta_I) - \Tr R ( \Delta_I) \right]
= \frac{16 \pi^3}{27 \beta} \left[ 3c \left( \Delta_I \right) - 2 a ( \Delta_I) \right] \, ,
\eea
where $R$ is a choice of $\U(1)_R$ symmetry and the trace is over all fermions in the theory.
Here, we use the chemical potentials $\Delta_I/\pi$ to parameterize a trial R-symmetry of the $\cN=1$ theory.
Details about this identification are given in the main text.
In writing the second equality in \eqref{tHoof:anomaly:0} we used the relation between conformal and R-symmetry 't\,Hooft anomalies in $\cN = 1$ SCFTs \cite{Anselmi:1997am},
\be\label{generalac}
a = \frac{9}{32} \Tr R^3 - \frac{3}{32} \Tr R \, ,
\qquad \qquad c = \frac{9}{32} \Tr R^3 - \frac{5}{32} \Tr R \, .
\ee

Secondly, the value of the index as a function of the chemical potentials $\Delta_I$ and the set of magnetic fluxes $\fn_I$, parameterizing the twist,
can be expressed in terms of the trial left-moving central charge of the 2d $\cN = (0,2)$ SCFT  as
\be
 \label{index theorem:2d central charge0}
 \log Z (\Delta_I,\fn_I) = \frac{\pi^2}{6 \beta} c_{l} \left( \Delta_I , \fn_I \right) \, .
\ee
This is related to the trial right-moving central charge $c_r$ by the gravitational anomaly $k$ \cite{Benini:2012cz,Benini:2013cda},
\be
\label{cl:cr:k}
 c_r - c_l = k \, , \qquad \qquad k = - \Tr \gamma_3 \, .
\ee
Here, $\gamma_3$ is the chirality operator in two dimensions.%
\footnote{With our choice of chirality operator the gaugino zero-modes have $\gamma_3 = 1$.}

Finally, there is a simple universal formula at leading order in $N$ for computing the index from the Bethe potential as a function of the chemical potentials $\Delta_I$,
\bea
\label{Z large N conjecture0}
 \log Z(\Delta_I,\fn_I) = - \frac{3}{\pi} \, \overline\cV \left( \Delta_I \right) - \sum_{I} \left[ \left( \fn_I - \frac{\Delta_I}{\pi} \right) \frac{\partial \overline\cV \left( \Delta_I \right)}{\partial \Delta_I} \right]
 = \frac{\pi^2}{6 \beta} c_r \left( \Delta_I , \fn_I \right) \, ,  
\eea
where the index $I$ runs over the bi-fundamental and adjoint fields in the quiver.
In the large $N$ limit the Bethe potential can be written as
\bea\label{centralchargea0}
 \overline\cV \left( \Delta_I \right)= \frac{16 \pi^3}{27 \beta} a ( \Delta_I) \, .
\eea
These formulae are valid for theories of D3-branes, where $\Tr R=\cO(1)$ and $c=a$ at large $N$ \cite{Henningson:1998gx}. These topologically twisted 
theories have  holographic duals in terms of black strings in AdS$_5 \times \text{SE}_5$, where SE$_5$ are five-dimensional Sasaki-Einstein spaces \cite{Benini:2012cz,Benini:2013cda}.

There is a striking similarity with the results obtained in \cite{Benini:2015eyy,Benini:2016rke,Hosseini:2016tor,Hosseini:2016ume} for the large $N$ limit of the topologically twisted index of three-dimensional theories, if we replace
\begin{equation*}
 \begin{array}{ccc}
  \emph{\text{central charge }} a( \Delta_I) \; &\Longleftrightarrow & \; \emph{\text{free energy on }} S^3 \\
 \emph{\text{central charge }} c_{r} \left( \Delta_I , \fn_I \right) \; &  \Longleftrightarrow & \; \emph{\text{black hole entropy}} \\
 c-\emph{\text{extremization}} \; & \Longleftrightarrow & \; I-\emph{\text{extremization}} \, .
\end{array}
\end{equation*}
Indeed, in  three dimensions,  the very same formula \eqref{Z large N conjecture0} holds  with the Bethe potential given by the  $S^3$ partition function $F_{S^3}$ of the gauge theory \cite{Hosseini:2016tor,Hosseini:2016ume}. Notice that  $F_{S^3}$ is the natural replacement for $a$, both being monotonic along RG flows \cite{Intriligator:2003jj,Jafferis:2011zi}. Moreover, both of them can be computed, as a function of $\Delta_I$, in terms of the volume of a family of Sasakian manifolds \cite{Gubser:1998vd,Butti:2005vn,Martelli:2005tp,Martelli:2006yb,Jafferis:2011zi}. 
 In addition, in three dimensions, the dual black string is replaced by a dual black hole  and
$\log Z$ computes the entropy of the black hole. As discussed in \cite{Benini:2015eyy,Benini:2016rke,Benini:2015bwz}, the entropy is obtained by extremizing $\log Z$  with respect to the $\Delta_I$ ($I$-extremization). Similarly, as it was shown in  \cite{Benini:2012cz,Benini:2013cda}, the exact central charge of the 2d SCFT is obtained by extremizing the trial right-moving central charge with respect to the $\Delta_I$.
Given the relation \eqref{Z large N conjecture0} we see that $c$-extremization corresponds to  $I$-extremization.  
%If we further identify $\beta$ with the inverse Frolov-Thorne temperature \cite{PhysRevD.39.2125}, the analogy becomes even more stringent: after extremization, the four-dimensional $\log Z$ reproduces the entropy density of the black string. 
Finally, in both three and four dimensions, the field theory extremization corresponds to the attractor mechanism \cite{Ferrara:1996dd, Dall'Agata:2010gj,Karndumri:2013iqa,Hristov:2014eza,Amariti:2016mnz,Klemm:2016kxw} on the gravity side.

Formula \eqref{index theorem:2d central charge0} implies a Cardy-like behavior of the topologically twisted index, which is related to the modular properties of the elliptic genus \cite{Kawai:1993jk,Benjamin:2015hsa}.
Analogous behaviors for other partition functions have been found in \cite{DiPietro:2014bca,Ardehali:2015hya,Ardehali:2015bla,Lorenzen:2014pna,Assel:2015nca,Bobev:2015kza,Genolini:2016sxe,DiPietro:2016ond,Brunner:2016nyk,Shaghoulian:2015kta,Shaghoulian:2015lcn}.\footnote{In particular, ambiguities in the definition of the partition function for 2d $\cN = (0,2)$ theories, the ellipitic genus, have been pointed out in \cite{Bobev:2015kza}. It would be interesting to see if there are similar ambiguities for the topologically twisted index of $\cN=1$ gauge theories.}

Notice also that our results \eqref{Z large N conjecture0} and \eqref{centralchargea0} are compatible with  a very simple relation between the field theoretical quantities  $\Tr R^3 ( \Delta_I) $ and $c_{r} \left( \Delta_I , \fn_I \right)$ that is worthwhile to state separately,
\be
 \label{field theory}
 c_{r} \left( \Delta_I , \fn_I \right) = - 3 \Tr R^3 ( \Delta_I) - \pi \sum_{I} \left[ \left( \fn_I - \frac{\Delta_I}{\pi} \right) \frac{\partial \Tr R^3 ( \Delta_I)}{\partial \Delta_I} \right] \, .
\ee

The rest of the paper is organized as follows. In Section \ref{The topologically twisted index}
we review the basic properties of the topologically twisted index in four dimensions.
In Section \ref{SYM} we analyze the high-temperature limit of the index for $\cN=4$ super Yang-Mills
while in Section \ref{Klebanov-Witten} we discuss the example of the conifold.
Then in Section \ref{high-temp limit of the index} we derive the formulae \eqref{tHoof:anomaly:0}, \eqref{index theorem:2d central charge0},
\eqref{Z large N conjecture0} and \eqref{centralchargea0}.
The body of the paper ends with Section \ref{discussion}, which contains  possible future problems to explore.
In Appendix \ref{Elliptic functions} we derive the asymptotics of the elliptic functions relevant for our computations.
Appendix \ref{Anomaly cancellation} is devoted to the study of anomaly cancellation conditions for theories on $S^2 \times T^2$.

\section{The topologically twisted index}
\label{The topologically twisted index}

The topologically twisted index of an $\cN = 1$ gauge theory with vector and chiral multiplets
and a non-anomalous $\U(1)_R$ symmetry in four dimensions is defined
as the path-integral of the theory on $S^2 \times T^2$ with a partial topological $A$-twist along $S^2$ \cite{Benini:2015noa}.
It is a function of $q = e^{2 \pi i \tau}$, where $\tau$ is the modular parameter of $T^2$, fugacities $y$
for the global symmetries and flavor magnetic fluxes $\fn$ on $S^2$ parameterizing the twist.
The index can be reduced to a matrix integral over  zero-mode gauge variables  by exploiting the localization technique.
The zero-mode gauge variables  $x = e^{i u}$  parameterize the Wilson lines on the two directions of the torus 
\be
u = 2 \pi \oint_{\textmd{A-cycle}} A - 2 \pi \tau \oint_{\textmd{B-cycle}} A \, , 
\ee
and are defined modulo 
\be u_i \sim u_i + 2 \pi n + 2 \pi m \tau\, ,\qquad\qquad  n\, ,m \in \mathbb{Z} \, . \ee
Explicitly, for a theory with gauge group $G$ and a set of chiral multiplets transforming in representations $\fR_I$ of $G$,
the topologically twisted index is given by a contour integral of a meromorphic form\footnote{Supersymmetric localization
picks a particular contour of integration and the final result can be expressed in terms of the Jeffrey-Kirwan residue \cite{Benini:2015noa}.}
\bea
 \label{path integral index}
 Z (\fn, y) = \frac1{|\cW|} \; \sum_{\fm \,\in\, \Gamma_\fh} \; \oint_\cC \;  & \prod_{\text{Cartan}} \left (\frac{dx}{2\pi i x}  \eta(q)^{2} \right )
 (-1)^{\sum_{\alpha > 0}\alpha (\fm)} \prod_{\alpha \in G} \left[ \frac{\theta_1(x^\alpha ; q)}{i \eta(q)} \right] \\
 \times & \prod_I \prod_{\rho_I \in \fR_I} \bigg[ \frac{i \eta(q)}{\theta_1(x^{\rho_I} y_I ; q)} \bigg]^{\rho_I(\fm)- \fn_I + 1} \, ,
\eea
where $\alpha$ are the roots of $G$ and $|\cW|$ denotes the order of the Weyl group.
Given a weight $\rho_I$ of the representation $\fR_I$, we use the notation $x^{\rho_I} = e^{i \rho_I(u)}$.
In this formula, $\theta_1(x; q)$ is a Jacobi theta function and $\eta(q)$ is the Dedekind eta function (see Appendix \ref{Elliptic functions}).
The result is summed over a lattice of gauge magnetic fluxes $\fm$ on $S^2$ living in the co-root lattice $\Gamma_{\fh}$ of the gauge group $G$
(up to gauge transformations).
The integrand in \eqref{path integral index} is a well-defined meromorphic function on the torus provided that the gauge and the gauge-flavor anomalies vanish (see Appendix \ref{Anomaly cancellation}). 

The topologically twisted index \eqref{path integral index} depends on a choice of fugacities $y_I$ for the flavor group
and a choice of integer magnetic fluxes $\fn_I$ for the R-symmetry of the theory.
It is useful to introduce complex chemical potentials $y_I=e^{i \Delta_I}$.
In an $\cN = 1$ theory, the choice of the R-symmetry is not unique,
and can be mixed with the $\U(1)$ flavor symmetries
\be
\label{mixing}
\fn_I = r_I + \fp_I \, ,
\ee
where $r_I$ is a reference R-symmetry and $\fp_I$ are magnetic fluxes under the flavor symmetries of the theory.
The invariance of each monomial term $W$ in the superpotential under the symmetries of the theory imposes the following constraints
\be\label{supconstraints}
 \prod_{I \in W} y_I = 1 \, , \qquad \qquad \sum_{I \in W} \fn_I = 2 \, ,
\ee
where the latter comes from supersymmetry, and, as a consequence,
\be\label{supconstraints2}
 \sum_{I \in W} \Delta_I \in 2 \pi \mathbb{Z}  \, .
\ee
Here, the product and the sum are restricted to the fields entering in the monomial $W$.

\section{$\cN = 4$ super Yang-Mills}
\label{SYM}

We first consider the twisted compactification of four-dimensional $\cN = 4$ super Yang-Mills (SYM) with gauge group $\SU(N)$ on $S^2$.
At low energies, it results in a family of 2d theories with $\cN = (0,2)$ supersymmetry depending on the twisting parameters $\fn$  \cite{Benini:2012cz,Benini:2013cda}.
The theory describes the dynamics of $N$ D3-branes wrapped on $S^2$ and can be pictured as the quiver gauge theory given in \eqref{SYM:quiver}.
\bea
\label{SYM:quiver}
\begin{tikzpicture}[font=\footnotesize, scale=0.9]
\begin{scope}[auto,%
  every node/.style={draw, minimum size=0.5cm}, node distance=2cm];
\node[circle]  (UN)  at (0.3,1.7) {$N$};
\end{scope}
\draw[decoration={markings, mark=at position 0.45 with {\arrow[scale=2.5]{>}}, mark=at position 0.5 with {\arrow[scale=2.5]{>}}, mark=at position 0.55 with {\arrow[scale=2.5]{>}}}, postaction={decorate}, shorten >=0.7pt] (-0,2) arc (30:341:0.75cm);
\node at (-2.2,1.7) {$\phi_{1,2,3}$};
\end{tikzpicture}
\eea
The superpotential 
\be
 \label{SYM:superpotential}
 W = \Tr \left( \phi_3 \left[ \phi_1, \phi_2 \right] \right) 
\ee
imposes the following constraints on the chemical potentials $\Delta_a$ and the flavor magnetic fluxes $\fn_a$ associated with the fields $\phi_a$,
\be
 \label{SYM:constraints}
 \sum_{a = 1}^{3} \Delta_a \in 2 \pi \mathbb{Z} \, , \qquad \qquad \sum_{a = 1}^{3} \fn_a = 2 \, .
 \ee 
The topologically twisted index for the $\SU(N)$ SYM theory is given by
\begin{equation}
 \label{SYM path integral_constraint}
 Z = \frac{\cA}{N!} \;
 \sum_{\substack{\fm \,\in\, \mathbb{Z}^N \, , \\ \sum_i \fm_i = 0}} \; \int_\cC \;
 \prod_{i=1}^{N - 1} \frac{d x_i}{2 \pi i x_i} \prod_{j \neq i}^{N} \frac{\theta_1\left( \frac{x_i}{x_j} ; q\right)}{i \eta(q)}
 \prod_{a=1}^{3} \left[ \frac{i \eta(q)}{\theta_1\left( \frac{x_i}{x_j} y_a ; q\right)} \right]^{\fm_i - \fm_j - \fn_a + 1} \, ,
\end{equation}
where we defined the quantity
\be
\label{SYM:A}
\cA = \eta(q)^{2 (N-1)} \prod_{a=1}^{3} \left[ \frac{i \eta(q)}{\theta_1\left(y_a ; q\right)} \right]^{(N-1) (1 - \fn_a)} \, .
\ee
Here, we already imposed the $\SU(N)$ constraint $\prod_{i = 1}^{N} x_i = 1$.
Instead of performing a constrained sum over gauge magnetic fluxes we introduce 
the Lagrange multiplier $w$ and consider an unconstrained sum.
Thus, the index reads
\begin{equation}
 \label{SYM path integral}
 Z = \frac{\cA}{N!} \;
 \sum_{\fm \,\in\, \mathbb{Z}^N} \; \int_\cB \frac{d w}{2 \pi i w} w^{\sum_{i = 1}^{N} \fm_i} \; \int_\cC \;
 \prod_{i=1}^{N - 1} \frac{d x_i}{2 \pi i x_i} \prod_{j \neq i}^{N} \frac{\theta_1\left( \frac{x_i}{x_j} ; q\right)}{i \eta(q)}
 \prod_{a=1}^{3} \left[ \frac{i \eta(q)}{\theta_1\left( \frac{x_i}{x_j} y_a ; q\right)} \right]^{\fm_i - \fm_j - \fn_a + 1} \, .
\end{equation}
In order to evaluate (\ref{SYM path integral}), we employ the same strategy as in \cite{Benini:2015noa,Benini:2015eyy}.
The Jeffrey-Kirwan residue picks a middle-dimensional contour in $\left(\mathbb{C}^{*}\right)^N$.
We can then take a large positive integer $M$ and resum the contributions $\fm \leq M-1$.
Performing the summations we get
\bea
Z = \frac{\cA}{N!} \; \int_\cB \frac{d w}{2 \pi i w} \;
\int_\cC \; \prod_{i=1}^{N - 1} \frac{dx_i}{2 \pi i x_i} \; \prod_{i=1}^{N} \frac{\left( e^{i B_i} \right)^M}{e^{i B_i} - 1}
\prod_{j \neq i}^{N} \frac{\theta_1\left( \frac{x_i}{x_j} ; q\right)}{i \eta(q)}
\prod_{a=1}^{3} \left[ \frac{i \eta(q)}{\theta_1\left( \frac{x_i}{x_j} y_a ; q\right)} \right]^{ 1- \fn_a} \, ,
\label{SYM:index:2}
\eea
where we defined
\be
\label{iB}
e^{i B_i} = w \prod_{j = 1}^{N} \prod_{a=1}^{3} \frac{ \theta_1 \left( \frac{x_j}{x_i} y_a ; q \right)}
{\theta_1 \left( \frac{x_i}{x_j} y_a ; q \right)} \, .
\ee
In picking the residues, we need to insert a Jacobian in the partition function and evaluate everything else at the poles,
which are located at the solutions to the {\it Bethe ansatz equations} (BAEs),
\be
\label{BAE}
 e^{i B_i} = 1 \, ,
\ee
such that the off-diagonal vector multiplet contribution does not vanish. We consider \eqref{BAE} as a system of 
$N$ independent equations with respect to $N$ independent variables $\{x_1,\dots,x_{N-1},w\}$. 
In the final expression, the dependence on the cut-off $M$ disappears and we find
\be
Z = \cA \sum_{I \in\mathrm{BAEs}}\frac{1}{\mathrm{det}\mathds{B}} \prod_{j \neq i}^{N}\frac{\theta_1\left(\frac{x_i}{x_j};q \right)}{i\eta(q)}
\prod_{a = 1}^3\left[ \frac{i\eta(q)}{\theta_1\left(\frac{x_i}{x_j}y_a;q \right)} \right]^{1-\fn_a} \, ,
\label{index:SYM:bethe}
\ee
where the summation is over all solutions $I$ to the BAEs \eqref{BAE}.
The matrix $\mathds{B}$ appearing in the Jacobian has the following form
\be
\mathds{B} = \frac{\partial \left( e^{i B_1}, \ldots, e^{i B_N} \right)}
{\partial \left( \log x_1, \ldots, \log x_{N-1}, \log w \right)} \, .
\ee

\subsection{Bethe potential at high temperature}
\label{Bethe potential_SYM}

We also introduce the ``Bethe potential'', a function 
that has critical points at the solutions to the BAEs \eqref{BAE}.
Here, we will not give the general expression for the Bethe potential as it is quite involved.
Instead, we will try to make our problem easier by looking at the \emph{high-temperature} limit, \ie\;$q \to 1\; (\tau \to i 0)$.

Let us start by considering the BAEs \eqref{BAE} at high temperature.
Taking the logarithm of the BAEs \eqref{BAE}, we obtain
\bea
\label{SYM:BAE:logarithm}
0  = - 2 \pi i n_i + \log w - \sum_{j = 1}^{N} \sum_{a=1}^{3} \left\{ \log\left[ \theta\left(\frac{x_i}{x_j}y_a;q \right)\right]
- \log\left[ \theta\left(\frac{x_j}{x_i}y_a;q \right)\right] \right\} \, ,
\eea
where $n_i$ is an integer that parameterizes the angular ambiguity.
It is convenient to use the variables $u_i$, $\Delta_a$, $v$, defined modulo $2 \pi$:
\be
x_i = e^{i u_i}\, ,\qquad \qquad y_a = e^{i \Delta_a} \, ,\qquad \qquad w = e^{i v} \, .
\ee
Then, using the asymptotic formul\ae\;\eqref{dedekind:hight:S} and \eqref{theta:hight:S}
we obtain the high-temperature limit of the BAEs \eqref{SYM:BAE:logarithm}, up to exponentially suppressed corrections,
\bea
0= - 2 \pi i n_i + i v + \frac{1}{\beta}\sum_{j = 1}^{N} \sum_{a=1}^3 \left[ F' \left(u_i - u_j + \Delta_a\right) - F' \left(u_j - u_i + \Delta_a\right) \right] \, ,
%\equiv  i B_i - 2 \pi i n_i \, ,
\label{BAE:SYM:hight}
\eea
where $i / (2 \pi \tau) = 1/\beta$ is the formal ``temperature'' variable.
Here, we have introduced the polynomial functions
\bea
F(u) = \frac{u^3}{6} - \frac{1}{2}\pi u^2 \mathrm{sign}[\re(u)] + \frac{\pi^2}{3} u \, , \quad \quad 
F'(u) = \frac{u^2}{2} - \pi u \sign[\re(u)] + \frac{\pi^2}{3} \, .
\label{F:function}
\eea

The high-temperature limit of the Bethe potential can be found directly
by integrating the BAEs \eqref{BAE:SYM:hight} with respect to $u_i$ and summing over $i$.
It reads
\bea
%\boxed{
\cV (\{u_i\}) & = \sum_{i = 1}^{N} \left( 2 \pi n_i - v \right) u_i
+ \frac{i (N - 1)}{\beta} \sum_{a = 1}^{3} F \left( \Delta_a \right)
\\ & + \frac{i}{2 \beta}\sum_{i \neq j}^{N} \sum_{a = 1}^{3}
 \left[ F \left( u_i - u_j + \Delta_a \right) + F \left( u_j - u_i + \Delta_a \right) \right]
 \, .
%}
\label{Bethe:potential:SYM:hight}
\eea
It is easy to check that the BAEs \eqref{BAE:SYM:hight} can be obtained as critical points of the above Bethe potential. We introduced a $\Delta_a$-dependent 
integration constant in order to have precisely one contribution  $F \left( u_i - u_j + \Delta_a \right)$ for each component of the adjoint multiplet.

It is natural to restrict the $\Delta_a$ to the fundamental domain. In the high-temperature limit,  we can assume that $\Delta_a$ are real and $0 < \Delta_a < 2 \pi$.
Moreover,  since \eqref{SYM:constraints} must hold, $\sum_{a=1}^3 \Delta_a$ can only be $0,2\pi,4\pi$ or  $6\pi$.
We have checked that $\sum_{a=1}^3 \Delta_a = 0, 6\pi$ lead to a singular index, and
those for $2\pi$ and $4\pi$ are related by a discrete symmetry of the index  \ie\;$y_a \to 1 / y_a \left( \Delta_a \to 2 \pi - \Delta_a \right)$.
Thus, without loss of generality, we will assume $\sum_{a=1}^{3} \Delta_a = 2 \pi$ in the following. 
 
\paragraph*{The solution for $\boldsymbol{\sum_a \Delta_a = 2 \pi}$.} 
We seek for solutions to the BAEs \eqref{BAE:SYM:hight} assuming that
\be
 0 < \re \left( u_j - u_i \right) + \Delta_{a} < 2 \pi \, , \qquad \forall \quad i, j, a \, .
\ee
Thus, the high-temperature limit of the BAEs \eqref{BAE:SYM:hight} takes the simple form
\bea
\frac{2}{\beta}\sum_{a = 1}^{3} \left( \Delta_a-\pi \right)\sum_{k = 1}^{N} \left( u_j - u_k \right) = i \left( 2\pi n_j-v \right) \, ,
\quad \mbox{for} \quad j=1,2, \ldots, N \, .
\eea
Imposing the constraints $\sum_{a = 1}^{3} \Delta_a = 2 \pi$ for the chemical potentials as well as $\SU(N)$ constraint $\sum_{i=1}^{N} u_i=0$ 
we obtain the following set of equations
\bea
 \label{BAE:SYM:hight:simp}
 \frac{i N}{\beta} u_j & = n_j-\frac{v}{2\pi}\, ,\quad \mbox{for}\quad  j=1,\dots,N-1 \, , \\ 
 -\frac{i N}{\beta}\sum_{j=1}^{N-1} u_j & = n_N-\frac{v}{2\pi} \, .
 \eea
Summing up all equations we obtain the solution for $v$, which is given by 
\be
v=\frac{2\pi}{N}\sum_{i=1}^N n_i\, .
\label{SYM:solution:v}
\ee
The solution for eigenvalues $u_i$ reads 
\be
u_i=-\frac{i \beta}{N}\left( n_i-\frac{1}{N}\sum_{i=1}^N n_i \right)\,.
\label{SYM:solution:u}
\ee
Notice that, the tracelessness condition is automatically satisfied in this case.

To proceed further, we need to provide an estimate on the value of the constants $n_i$.
Whenever any two integers are equal $n_i = n_j$, we find that the off-diagonal vector multiplet contribution to the index,
which is an elliptic generalization of the Vandermonde determinant, vanishes.
Moreover, the high-temperature expansion \eqref{theta:hight:S} breaks down as subleading terms start blowing up.
Hence, we should make another ansatz for the phases $n_i$ such that
\be
n_i-n_j\neq 0 ~ ~ \mbox{mod} ~ ~ N \, .
\label{condition:n}
\ee
To understand how much freedom we have, let us first note that 
eigenvalues $u_i$ are variables defined on the torus $T^2$ and thus they should be periodic in $\beta$.
Due to \eqref{SYM:solution:u}, this means that integers $n_i$ are defined modulo $N$ and hence, without loss 
of generality, we can consider only integers lying in the domain $\left[ 1,N \right]$ with the condition 
\eqref{condition:n} modified to $n_i\neq n_j \, ,\forall~ i,j$.
This leaves us with the only choice $n_i=i$ and its permutations.

Substituting \eqref{SYM:solution:u} and \eqref{SYM:solution:v}
into the Bethe potential \eqref{Bethe:potential:SYM:hight}, we obtain
\be
 \label{SYM:onshel:bethe potential}
 \cV \left( \Delta_a \right) \big|_{\text{BAEs}} = \frac{i \left( N^2 - 1 \right)}{\beta} \sum_{a = 1}^{3} F \left( \Delta_a \right)
 = \frac{i \left( N^2 - 1 \right)}{2 \beta} \Delta_1 \Delta_2 \Delta_3 \, ,
\ee
up to terms $\cO(\beta)$.

There is an interesting relation between the ``on-shell'' Bethe potential \eqref{SYM:onshel:bethe potential} and the central charge of
the UV four-dimensional theory. Note that, given the constraint $\sum_{a=1}^{3} \Delta_{a} = 2 \pi$, the quantities $\Delta_a$ can be used to parameterize the most general R-symmetry of the theory
\be R(\Delta_a) = \sum_{a=1}^3  \Delta_a  \frac{R_a}{2\pi} \, ,\ee
where $R_a$ gives charge 2 to $\phi_a$ and zero to $\phi_b$ with $b\neq a$. Observe also that the cubic R-symmetry 't Hooft anomaly is given by
\bea
 \Tr R^3 \left( \Delta_a \right) & = \left(N^2 - 1\right) \left[ 1 + \sum _{a = 1}^{3} \left( \frac{\Delta_a}{\pi} - 1 \right)^3 \right] \\
 & = \frac{3 \left( N^2 - 1 \right)}{\pi^3} \Delta_1 \Delta_2 \Delta_3 \, ,
\eea
where the trace is taken over the fermions of the theory.
Therefore, the ``on-shell'' value of the Bethe potential \eqref{SYM:onshel:bethe potential} can be rewritten as
\be
\label{on:shell:Bethe:pot:hight:SYM}
%\boxed{
\cV \left(\Delta_a \right) \big|_{\text{BAEs}} = \frac{i \pi^3}{6 \beta} \Tr R^3 \left(\Delta_a \right) = \frac{16 i \pi^3}{27 \beta} a \left(\Delta_a \right) \, ,
%}
\ee
where in the second equality we used the relation \eqref{generalac}. Note that the linear R-symmetry 't Hooft anomaly is zero for $\cN = 4$ SYM. 

\subsection{The topologically twisted index at high temperature}
\label{The index at high temperature_SYM}

We are interested in the high-temperature limit of the logarithm of the partition function \eqref{index:SYM:bethe}.
We shall use the asymptotic expansions \eqref{dedekind:hight:S} and \eqref{theta:hight:S} in order to calculate the vector and hypermultiplet contributions to the twisted index in the $\beta \to 0$ limit.

The contribution of the off-diagonal vector multiplets can be computed as
\begin{align}
\log \prod_{i \neq j}^N \left[ \frac{\theta_1\left( \frac{x_i}{x_j} ; q\right)}{i \eta(q)}\right] =
- \frac{1}{\beta} \sum_{i \neq j}^{N} F' \left( u_i - u_j \right) - \frac{ i N (N - 1) \pi}{2} \, ,
\end{align}
in the asymptotic limit $q \to 1\;(\beta \to 0)$. The contribution of the matter fields is instead
\bea
 \log \prod_{i \neq j}^N \prod_{a=1}^3 \left[ \frac{i\eta(q)}{\theta_1\left(\frac{x_i}{x_j}y_a;q \right)} \right]^{1-\fn_a} & =
 - \frac{1}{\beta} \sum_{i \neq j}^{N} \sum_{a = 1}^{3} \left[ \left(\fn_a - 1\right) F' \left( u_i - u_j + \Delta_a \right) \right] \\
 & + \frac{ i N (N-1) \pi}{2} \sum_{a = 1}^{3} \left( 1 -  \fn_a \right) \, , \qquad \textmd{as } \; \beta \to 0 \, .
\eea
The prefactor $\cA$ in the partition function \eqref{SYM:A} at high temperature contributes
\bea
 \log \left\{ \eta(q)^{2 (N-1)} \prod_{a=1}^{3} \left[ \frac{i \eta(q)}{\theta_1\left(y_a ; q\right)} \right]^{(N-1) (1 - \fn_a)} \right\} & =
 - \frac{N-1}{\beta} \left[ \frac{\pi^2}{3} + \sum_{a = 1}^{3} \left( \fn_a - 1 \right) F' (\Delta_a) \right] \\
 & - (N-1) \left[ \log \left( \frac{\beta}{2 \pi}\right) - \frac{i \pi}{2} \sum_{a = 1}^{3} \left( 1 -  \fn_a \right) \right] \, .
\eea
The last term to work out is  $ - \log \det \mathds{B}$.
The matrix $\mathds{B}$, imposing $e^{i B_i} = 1$, reads 
\bea
\mathds{B} = \frac{\partial \left( B_1, \ldots, B_N \right)}{\partial \left( u_1, \ldots, u_{N-1}, v \right)}
\, , \qquad \textmd{as } \; \beta \to 0 \, ,
\label{Jacobian:SYM}
\eea
and has the following entries 
\bea
\frac{\partial B_k}{\partial u_j} & = \frac{2\pi i}{\beta}N\delta_{kj} \, ,\quad \mbox{for} \quad k,j=1,2,\dots,N-1\, ,\\
\frac{\partial B_N}{\partial u_k} & = -\frac{2\pi i}{\beta}N \, ,\quad \frac{\partial B_k}{\partial v}=1 \, ,
\quad \mbox{for} \quad k = 1, 2, \dots,N-1\, ,\\
\frac{\partial B_N}{\partial v} &= 1\, .
\eea
Here, we have already imposed the constraint $\sum_{a = 1}^{3} \Delta_a = 2 \pi$.
Therefore, we obtain
\be
- \log \det \mathds{B} = (N-1) \left[ \log \left( \frac{\beta}{2 \pi} \right) - \frac{i \pi}{2}\right] - N\log N  \, .
\ee
Putting everything together we can write the high-temperature limit of the twisted index, at finite $N$,
\bea
\label{SYM:index:final:hight}
%\boxed{
 \log Z & = - \frac{1}{\beta} \sum_{i \neq j}^{N} \left[ F' \left( u_i - u_j \right) + \sum_{a=1}^3 \left( \fn_a - 1\right) F' \left(u_i - u_j + \Delta_a \right) \right] \\
 & - \frac{N - 1}{\beta} \left[ \frac{\pi^2}{3} + \sum_{a=1}^3 \left( \fn_a - 1 \right) F' \left( \Delta_a \right) \right] - N \log N \, ,
%}
\eea
up to exponentially suppressed corrections. We may then evaluate the index by
substituting the pole configurations \eqref{SYM:solution:u} back into the functional \eqref{SYM:index:final:hight} to get,
\bea
 \label{SYM:logZ:bethe}
  \log Z & = - \frac{N^2 - 1}{\beta} \left[ \frac{\pi^2}{3} + \sum_{a=1}^3 \left( \fn_a - 1 \right) F' \left( \Delta_a \right) \right] - N \log N  \\
& = - \frac{N^2 - 1}{2 \beta} \sum_{\substack{ a < b \\ (\neq c) }} \Delta_a \Delta_b \fn_c - N \log N \, ,
\eea
which, to leading order in $1 / \beta$, can be rewritten in a more intriguing form:
\be
\label{SYM:index:final:hight:bethe}
%\boxed{
 \log Z = i \sum_{a=1}^{3} \fn_a \frac{\partial \cV(\Delta_a) \big|_{\text{BAEs}} }{\partial \Delta_a} \, .
%}
\ee

We can relate the index to the trial left-moving central charge of the two-dimensional $\cN = (0, 2)$ theory on $T^2$. The latter reads \cite{Benini:2012cz,Benini:2013cda}
\be
c_l = c_r - k \, ,
\ee
where $k$ is the gravitational anomaly
\be
k = - \Tr \gamma_3 = - \left( N^2 - 1 \right) \left[ 1 + \sum_{a = 1}^{3} \left( \fn_a - 1 \right) \right] = 0 \, ,
\ee
and $c_r$ is the trial right-moving central charge
\bea
 \label{c2d:anomaly}
 c_{r} \left( \Delta_a , \fn_a \right) = - 3 \Tr  \gamma_3 R^2 \left( \Delta_a \right)
 &= - 3 \left( N^2 - 1 \right) \left[ 1 + \sum_{a=1}^{3} \left( \fn_a - 1 \right) \left( \frac{\Delta_a}{\pi} - 1 \right)^2 \right] \,  \\
 &= - \frac{3 \left( N^2 - 1 \right)}{\pi^2}  \sum_{\substack{ a < b \\ (\neq c) }} \Delta_a \Delta_b \fn_c \, .
\eea
Here, the trace is taken over the fermions and $\gamma_3$ is the chirality operator in 2d. In the twisted compactification, each of the chiral fields $\phi_a$ give rise to 2d fermions.
The difference between the number of fermions of opposite chiralities is $\fn_a-1$, thus explaining the above formulae.
We used $\Delta_a/\pi$ to parameterize the trial R-symmetry.   
We find that the index is given by
\be
\label{SYM:index:final:hight:bethe2}
%\boxed{
%\re
\label{SYM:2d:4d:relation}
 \log Z = \frac{\pi^2}{6 \beta} c_{r} \left(\Delta_a , \fn_a \right)
 = - \frac{16 \pi^3}{27\beta} \sum_{a=1}^{3} \fn_a \frac{\partial a(\Delta_a) }{\partial \Delta_a} \, .
%}
\ee
As shown in \cite{Benini:2012cz,Benini:2013cda}, the exact central charge of the 2d CFT is obtained by extremizing $c_{r} \left( \Delta_I , \fn_I \right)$ with respect to the $\Delta_I$.
Given the above relation \eqref{SYM:2d:4d:relation}, we see that this is equivalent to extremizing the $\log Z$ at high temperature.

\section{The Klebanov-Witten theory}
\label{Klebanov-Witten}

In this section we study the Klebanov-Witten theory \cite{Klebanov:1998hh} describing the low energy dynamics of $N$ D3-branes at the conifold singularity.
This is the Calabi-Yau cone over the homogeneous Sasaki-Einstein five-manifold $T^{1,1}$ which can be described as the coset $\SU(2) \times \SU(2) / \U(1)$.
This theory has $\cN = 1$ supersymmetry and has $\SU(N) \times \SU(N)$ gauge group with bi-fundamental chiral multiplets $A_i$ and $B_j$, $i , j = 1, 2$, transforming in the $\left( {\bf N}, \overline{\bf N} \right)$ and $\left( \overline{\bf N}, {\bf N} \right)$ representations of the two gauge groups. This can be pictured as
\bea
\label{KW:quiver}
\begin{tikzpicture}[baseline, font=\footnotesize, scale=0.8]
\begin{scope}[auto,%
  every node/.style={draw, minimum size=0.5cm}, node distance=2cm];
  % the vertices
\node[circle] (USp2k) at (-0.1, 0) {$N$};
\node[circle, right=of USp2k] (BN)  {$N$};
\end{scope}
  % the edges
\draw[decoration={markings, mark=at position 0.9 with {\arrow[scale=2.5]{>}}, mark=at position 0.95 with {\arrow[scale=2.5]{>}}}, postaction={decorate}, shorten >=0.7pt]  (USp2k) to[bend left=40] node[midway,above] {$A_{i}$} node[midway,above] {} (BN) ; 
\draw[decoration={markings, mark=at position 0.1 with {\arrow[scale=2.5]{<}}, mark=at position 0.15 with {\arrow[scale=2.5]{<}}}, postaction={decorate}, shorten >=0.7pt]  (USp2k) to[bend right=40] node[midway,above] {$B_{j}$}node[midway,above] {}  (BN) ;  
\end{tikzpicture}
\eea
It has a quartic superpotential,
\be
\label{KW:superpotential}
W = \Tr \left( A_1 B_1 A_2 B_2 - A_1 B_2 A_2 B_1 \right) \, .
\ee
We assign chemical potentials $\Delta_{1,2} \in (0, 2 \pi)$ to $A_i$ and $\Delta_{3,4} \in (0, 2 \pi)$ to $B_i$.
Invariance of the superpotential under the global symmetries requires
\begin{align}
\label{KW:constraints}
\sum_{I = 1}^{4} \Delta_I \in 2 \pi \mathbb{Z} \, , \qquad \qquad \sum_{I = 1}^{4} \fn_I = 2 \, .
\end{align}
For the Klebanov-Witten theory, the topologically twisted index can be written as
\begin{multline}
\label{initial Z full}
Z = \frac1{(N!)^2} \sum_{\fm, \wt\fm \in \bZ^N} \int_\cB \; \frac{d w}{2 \pi i w}\frac{d \wt w}{2\pi i \wt w } w^{\sum_{i = 1}^N  \fm_i}\, 
\wt w^{ \sum_{i=1}^N\wt\fm_i  }  \times \\
\times \int_\cC \; \prod_{i=1}^{N-1}  \frac{dx_i}{2\pi i x_i} \,  \frac{d\tilde x_i}{2\pi i \tilde x_i} \eta(q)^{4 (N-1)}
\prod_{i\neq j}^N \left[ \frac{\theta_1\left( \frac{x_i}{x_j} ; q\right)}{i \eta(q)} \frac{\theta_1\left( \frac{\tilde x_i}{\tilde x_j} ; q\right)}{i \eta(q)}\right] \times \\
\times \prod_{i,j=1}^N \prod_{a=1,2}
\left[ \frac{i \eta(q)}{\theta_1\left( \frac{x_i}{\tilde x_j} y_a ; q\right)} \right]^{\fm_i - \wt\fm_j - \fn_a +1}
\prod _{b=3,4} \left[ \frac{i \eta(q)}{\theta_1\left( \frac{\tilde x_j}{x_i} y_b ; q\right)} \right]^{\wt\fm_j - \fm_i - \fn_b +1} \, .
\end{multline}
Here, we assumed that eigenvalues $x_i$ and $\tx_i$ satisfy the $\SU(N)$ constraint $\prod_{i=1}^N x_i=\prod_{i=1}^N \tx_i=1$. 
Hence, the integration is performed over $(N-1)$ variables instead of $N$. In order to impose 
the $\SU(N)$ constraints for the magnetic fluxes, \ie\;
\bea
\label{KW:tracelessness condition}
\sum_{i = 1}^{N} \fm_i = \sum_{i = 1}^{N} \wt \fm_i = 0 \, ,
\eea
we have introduced two Lagrange multipliers $w = e^{i v}$ and $\wt w=e^{i\tv}$.
Now, we can resum over gauge magnetic fluxes $\fm_i \leq M - 1$ and $\wt \fm_j \geq 1 - M$ 
for some large positive integer cut-off $M$. We obtain 
\bea
Z = \frac1{(N!)^2}  \int_\cB & \; \frac{d w}{2 \pi i w} \, \frac{d \wt w}{2\pi i \wt w } \int_\cC \; \prod_{i=1}^{N-1} \frac{dx_i}{2\pi i x_i} \, \frac{d\tilde x_i}{2\pi i \tilde x_i}
\prod_{i\neq j}^N \left[ \frac{\theta_1\left( \frac{x_i}{x_j} ; q\right)}{i \eta(q)} \frac{\theta_1\left( \frac{\tilde x_i}{\tilde x_j} ; q\right)}{i \eta(q)}\right] \times \\
& \times \cP \, \prod_{i=1}^N \frac{ (e^{iB_i})^M}{ e^{iB_i} -1 } \prod_{j=1}^N \frac{ (e^{i\wt B_j})^M}{ e^{i\wt B_j} - 1} \, ,
\eea
where we defined the quantities
\be
\label{KW:A}
\cP = \eta(q)^{4 (N-1)} \prod_{i,j=1}^N \, \prod_{a=1,2} \left[ \frac{i \eta(q)}{\theta_1\left( \frac{x_i}{\tilde x_j} y_a ; q\right)} \right]^{1-\fn_a}
\prod_{b=3,4} \left[ \frac{i \eta(q)}{\theta_1\left( \frac{\tilde x_j}{x_i} y_b ; q\right)} \right]^{1-\fn_b} \, ,
\ee
and
\bea
\label{BA expressions}
e^{iB_i} = w\prod_{j=1}^N \frac{\prod _{b=3,4} \theta_1\left( \frac{\tilde x_j}{x_i} y_b ; q\right)}
{\prod _{a=1,2} \theta_1\left( \frac{x_i}{\tilde x_j} y_a ; q\right)}  \, ,\qquad
e^{i\wt B_j} = \wt w^{-1} \,\prod_{i=1}^N   \frac{\prod _{b=3,4} \theta_1\left( \frac{\tilde x_j}{x_i} y_b ; q\right)}{\prod _{a=1,2}
\theta_1\left( \frac{x_i}{\tilde x_j} y_a ; q\right)} \, .
\eea
Then, similarly to the case of $\mathcal{N}=4$ SYM, the following BAEs equations
\be
\label{KW:BAEs}
e^{i B_i} = 1 \, ,\qquad \qquad \qquad e^{i \wt B_j} = 1 \, .
\ee
determine the poles of the integrand.
In order to calculate the index we simply insert a Jacobian 
of the transformation from $\{\log x_i,\log \tx_i, \log w,\log \wt w\}$ to $\{e^{i B_i},e^{i \tilde{B}_i}\}$ variables and evaluate everything else  
at the solutions to BAEs. In the final expression, the dependence on the cut-off $M$ disappears.
We can then write the partition function as,
\be
\label{index:KW:bethe}
Z = \sum_{I \in\mathrm{BAEs}}\frac{1}{\mathrm{det}\mathds{B}}
\prod_{i\neq j}^N \left[ \frac{\theta_1\left( \frac{x_i}{x_j} ; q\right)}{i \eta(q)} \frac{\theta_1\left( \frac{\tilde x_i}{\tilde x_j} ; q\right)}{i \eta(q)}\right] \cP \, ,
\ee
where $\mathds{B}$ is a $2N \times 2N$ matrix
\be
\label{Jacobian general:KW}
\mathds{B} = \frac{\partial \big( e^{iB_1},\dots,e^{i B_N}, e^{i\wt B_1},\dots, e^{i\wt B_N} \big) }
{ \partial\left( \log x_1,\dots,\log x_{N-1},\log w, \log \tx_1,\dots, \log \tx_{N-1}, \log \wt w\right)} \, .
%= \mat{ x_l \dfrac{\partial e^{iB_j}}{\partial x_l} & \tilde x_l \dfrac{ \partial e^{iB_j}}{\partial \tilde x_l} \\[1em] x_l \dfrac{\partial e^{i\wt B_j}}{\partial x_l} & \tilde x_l \dfrac{ \partial e^{i\wt B_j}}{\partial \tilde x_l} }_{2N\times 2N} \, .
\ee

\subsection{Bethe potential at high temperature}
\label{The Bethe potential_KW}

Let us now look at the Bethe potential at high temperature, \ie\;$\beta \to 0$ limit.
Taking the logarithm of the BAEs \eqref{KW:BAEs} we obtain
\bea
\label{KW:BAE:logarithm}
0 & = - 2 \pi i n_i + \log w - \sum_{j = 1}^{N} \left\{ \sum_{a = 1,2} \log\left[ \theta_1\left( \frac{x_i}{\tilde x_j} y_a ; q\right)\right]
- \sum_{b = 3,4} \log\left[ \theta_1\left( \frac{\tilde x_j}{x_i} y_b ; q\right)\right]  \right\} \, , \\
0 & = - 2 \pi i \tilde n_j - \log \wt w - \sum_{i = 1}^{N} \left\{ \sum_{a = 1,2} \log\left[ \theta_1\left( \frac{x_i}{\tilde x_j} y_a ; q\right) \right]
- \sum_{b = 3,4} \log\left[ \theta_1\left( \frac{\tilde x_j}{x_i} y_b ; q\right) \right]  \right\} \, ,
\eea
where $n_i\, , \tilde n_j$ are integers that parameterize the angular ambiguities.
In order to compute the high-temperature limit of the above BAEs, we go to the variables $u_i \, , \tilde u_j \, , \Delta_I \, , v\, , \tv$,
defined modulo $2 \pi$,
and employ the asymptotic expansions \eqref{dedekind:hight:S} and \eqref{theta:hight:S}.
We find
\bea
\label{BAE:KW:hight}
0 & = -2\pi i n_i +iv+ \frac{1}{\beta} \sum_{j = 1}^{N} \left[ \sum_{a = 1,2} F' \left(u_i - \tilde u_j + \Delta_a\right) 
- \sum_{b = 3,4} F' \left(\tilde u_j - u_i + \Delta_b\right)  \right] 
%\equiv i B_i -2\pi i n_i
\, , \\
0 & = -2\pi i \tilde n_j - i \tv +  \frac{1}{\beta} \sum_{i = 1}^{N} \left[ \sum_{a = 1,2} F' \left(u_i - \tilde u_j + \Delta_a\right) 
- \sum_{b = 3,4} F' \left(\tilde u_j - u_i + \Delta_b\right)\right] 
%\equiv i \wt B_j -2\pi i \tilde n_j  
\, ,
\eea
where the polynomial function $F' (u)$ is defined in \eqref{F:function}.
The BAEs \eqref{BAE:KW:hight} can be obtained as critical points of the Bethe potential
\bea
\label{KW:bethe potential}
%\boxed{
\begin{aligned}
 \cV \left(\{u_i, \tilde u_i\}\right) & = 2\pi \sum_{i = 1}^{N} \left( n_i u_i - \tilde n_i \tilde u_i \right) -
  \sum_{i=1}^{N} \left( v\, u_i+\tv\,\tu_i \right) \\
  & + \frac{i}{\beta} \sum_{i,j=1}^{N} \left[ \sum_{a=1,2} F(u_i - \tilde u_j + \Delta_a) + \sum_{b=3,4} F(\tilde u_j - u_i + \Delta_b) \right]\, .
\end{aligned}
%}
\eea

We next turn to find solutions to the BAEs \eqref{BAE:KW:hight}. The constraints (\ref{KW:constraints}) imply that 
$\sum_{I=1}^4 \Delta_I$ can only be $0,2\pi,4\pi,6\pi$ or  $8\pi$.
For the conifold theory, it turns out that the solutions with $\sum_{I=1}^4 \Delta_I = 0, 8\pi$ lead to a singular index,
those for $2 \pi$ and $6 \pi$ are related by a discrete symmetry of the index, \ie\;$y_I \to 1 / y_I \left( \Delta_I \to 2 \pi - \Delta_I \right)$, and
there are no consistent solutions for $\sum_{I=1}^4 \Delta_I = 4\pi$.
Thus, without loss of generality, we  assume again $\sum_{I=1}^4 \Delta_I = 2 \pi$ in the following. 

\paragraph*{The solution for $\boldsymbol{\sum_I \Delta_I = 2 \pi}$.}
We assume that
\bea
0 < \re \left( \tilde u_j - u_i \right) + \Delta_{3,4} < 2 \pi \, , \qquad \qquad - 2 \pi < \re \left( \tilde u_j - u_i \right) - \Delta_{1,2} < 0 \, , \qquad \forall \quad i, j \, .
\eea
Hence, the BAEs \eqref{BAE:KW:hight} become
\bea
\label{BAE:KW:hight:simplified}
0 & = - 2 \pi i n_j + i\,v- \frac{1}{\beta} \sum_{k = 1}^{N} \left[ \Delta_1 \Delta_2 - \Delta_3 \Delta_4 - 2 \pi  \left(\tilde u_k - u_j \right) \right] \, , \\
0 & = - 2 \pi i \tilde n_k -i\,\tv -\frac{1}{\beta} \sum_{j = 1}^{N} \left[ \Delta_1 \Delta_2 - \Delta_3 \Delta_4 - 2 \pi  \left(\tilde u_k - u_j \right)  \right] \, .
\eea
Here, we have already imposed the constraint $\sum_{I=1}^{4} \Delta_I = 2 \pi$.
Imposing the $\SU(N)$ constraints for $u_i\,,\, \tu_i$ we can rewrite the BAEs in the following form 
\be
\label{BAE:KW:1}
\frac{iN}{\beta}\,u_j &=& n_j-\frac{v}{2\pi}+\frac{iN}{2\pi\beta}\left( \Delta_3\,\Delta_4-\Delta_1\,\Delta_2 \right)\, ,\quad \mbox{for}\quad j=1,\dots,N-1\, ,\\
\label{BAE:KW:2}
-\frac{iN}{\beta}\,\sum_{j=1}^{N-1} u_j &=& n_N-\frac{v}{2\pi}+\frac{iN}{2\pi\beta}\left( \Delta_3\,\Delta_4-\Delta_1\,\Delta_2 \right)\, , \\
\label{BAE:KW:3}
\frac{iN}{\beta}\,\tu_j &=& - \tilde n_j-\frac{\tv}{2\pi}-\frac{iN}{2\pi\beta}\left( \Delta_3\,\Delta_4-\Delta_1\,\Delta_2 \right)\, ,\quad \mbox{for}\quad j=1,\dots,N-1\, ,\\
\label{BAE:KW:4}
-\frac{iN}{\beta}\,\sum_{j=1}^{N-1} \tu_j &=& -\tilde n_N-\frac{\tv}{2\pi}-\frac{iN}{2\pi\beta}\left( \Delta_3\,\Delta_4-\Delta_1\,\Delta_2 \right)\, .
\ee

Equations \eqref{BAE:KW:1} and \eqref{BAE:KW:3} can be considered as equations defining $u_i$ and $\tu_i$. In order to find $v$ and $\tv$ we need
to sum $(N-1)$ equations \eqref{BAE:KW:1} with \eqref{BAE:KW:2} and equations  \eqref{BAE:KW:3} with \eqref{BAE:KW:4}. This leads to 
\bea
\label{KW:sol:final}
v & =\frac{iN}{\beta}\left( \Delta_3\,\Delta_4-\Delta_1\,\Delta_2 \right)+\frac{2\pi}{N}\sum_{j=1}^N n_j \, , \qquad
u_j =-\frac{i\beta}{N}\left( n_j-\frac{1}{N}\sum_{i=1}^N n_i \right) \, , \\
\tilde v & =-\frac{iN}{\beta}\left( \Delta_3\,\Delta_4-\Delta_1\,\Delta_2 \right)-\frac{2\pi}{N}\sum_{j=1}^N \tilde n_j \, ,\qquad
\tilde u_j =\frac{i\beta}{N}\left( \tilde n_j-\frac{1}{N}\sum_{i=1}^N \tilde n_i \right)\, .
\eea
According to our prescription, all solutions which lead to zeros of the off-diagonal vector multiplet should be avoided.
Therefore, the allowed parameter space for integers $n_i$ and $\tilde n_i$ is determined by
\be
\label{KW:integers:constraint}
 n_j - n_i \neq 0 \quad \text{ mod }~ N \, , \qquad \qquad \tilde n_j - \tilde n_i \neq 0 \quad \text{ mod }~ N \, .
\ee
Given the solution \eqref{KW:sol:final} to the BAEs, the integers $n_i$ and $\tilde n_i$ are defined modulo $N$ due to the $\beta$-periodicity of eigenvalues on $T^2$.
Thus we are left with $\{n_i, \tilde n_i\} \in \left[ 1,N \right]$.
The only possible choice is then given by $n_i = \tilde n_i = i$ and its permutations.

Finally, plugging the solution \eqref{KW:sol:final} to the BAEs back into \eqref{KW:bethe potential},
we obtain the ``on-shell" value of the Bethe potential
\be
\label{SU(N):KW:on-shell:bethe potential}
%\boxed{
\cV(\Delta_I) \big|_{\text{BAEs}} = \frac{i N^2}{2\beta} \sum_{a < b < c} \Delta_a \Delta_b \Delta_c \, ,
%}
\ee
up to terms $\cO(\beta)$.
The relation between the ``on-shell'' Bethe potential and the 4d conformal anomaly coefficients also holds for the conifold theory.
The R-symmetry 't Hooft anomalies can be expressed as
\bea\label{KWR3}
 \Tr R \left( \Delta_I \right) & = 2 \left(N^2 - 1\right) + N^2 \sum _{I = 1}^{4} \left( \frac{\Delta_I}{\pi} - 1 \right) = - 2 \, , \\
 \Tr R^3 \left( \Delta_I \right) & = 2 \left(N^2 - 1\right) + N^2 \sum _{I = 1}^{4} \left( \frac{\Delta_I}{\pi} - 1 \right)^3 \\
 & = \frac{3 N^2}{\pi^3} \sum_{a < b < c} \Delta_a \Delta_b \Delta_c - 2 \, ,
\eea
where we used $\Delta_I / \pi$ to parameterize the trial R-symmetry of the theory.
Hence, Eq.\,\eqref{SU(N):KW:on-shell:bethe potential} can be rewritten as
\be
\label{SU(N):KW:on-shell:bethe potential:TrR3}
%\boxed{
\cV(\Delta_I) \big|_{\text{BAEs}} = \frac{i \pi^3}{6 \beta} \left[ \Tr R^3 \left( \Delta_I \right) - \Tr R \left( \Delta_I \right) \right]
= \frac{16 i \pi^3}{27 \beta}  \left[ 3 c \left( \Delta_I \right) - 2 a( \Delta_I) \right] \, .
%}
\ee
Here, we employed Eq.\,\eqref{generalac} to write the second equality.

\subsection{The topologically twisted index at high temperature}
\label{The index at high temperature_KW}

The twisted index, at high temperature, can be computed by evaluating the contribution of the saddle point configurations to \eqref{index:KW:bethe}.
The procedure for computing the index is very similar to that presented in section \ref{The index at high temperature_SYM}.
The off-diagonal vector multiplet contributes
\begin{align}
\log \prod_{i\neq j}^N \left[ \frac{\theta_1\left( \frac{x_i}{x_j} ; q\right)}{i \eta(q)} \frac{\theta_1\left( \frac{\tilde x_i}{\tilde x_j} ; q\right)}{i \eta(q)}\right] =
- \frac{1}{\beta} \sum_{i \neq j}^{N} \left[ F' \left( u_i - u_j \right) + F' \left( \tu_i - \tu_j \right) \right] - i N (N - 1) \pi \, .
\end{align}
The quantity $\cP$, Eq.\,\eqref{KW:A}, contributes
\bea
 \log \cP & = - \frac{1}{\beta} \bigg\{ \frac{2\pi^2}{3} (N-1)
 %+ \sum_{i , j = 1}^{N} \Big[ \sum_{a = 1, 2} \left( \fn_a - 1 \right) F' \left( u_i - \tilde u_j + \Delta_a \right) \\
 %& + \sum_{b = 3, 4} \left( \fn_b - 1 \right) F' \left( \tilde u_j - u_i + \Delta_b \right) \Big] \bigg\}
 + \sum_{i , j = 1}^{N} \sum_{ \substack{I = 1,2: + \\ I = 3, 4: -}} (\fn_I - 1) F' \left[ \pm \left( u_i - \tilde u_j \pm \Delta_I \right) \right] \bigg\} \\  %%
 & + \frac{i N^2 \pi}{2} \sum_{I = 1}^{4} \left( 1 - \fn_I \right) - 2 (N-1) \log \left( \frac{\beta}{2 \pi} \right) \, .
\eea
The Jacobian \eqref{Jacobian general:KW} has the following entries
\bea
\frac{\partial B_k}{\partial u_j} & = -\frac{\partial \wt B_k}{\partial \tu_j} =\frac{2\pi i}{\beta}N\delta_{kj} \, ,\quad \mbox{for} \quad k,j=1,2,\dots,N-1\, ,\\
\frac{\partial B_N}{\partial u_k} & = -\frac{\partial \wt B_N}{\partial \tu_k}=-\frac{2\pi i}{\beta}N \, ,\quad 
\frac{\partial B_k}{\partial v}=-\frac{\partial \wt B_k}{\partial \tv}=1 \, ,
\quad \mbox{for} \quad k = 1, 2, \dots,N-1\, ,\\
\frac{\partial B_N}{\partial v} &=- \frac{\partial \wt B_N}{\partial \tv}=1\, ,\quad
\frac{\partial B_k}{\partial \tu_j}  = \frac{\partial \wt B_k}{\partial u_j} = \frac{\partial B_k}{\partial \tv} = \frac{\partial \wt B_k}{\partial v} =0\, ,\quad
\mbox{for}\quad k,j=1,\dots,N\, .
\eea
Now, it is straightforward to find the determinant of the matrix $\mathds{B}$:
\be
- \log \det \mathds{B} =2 (N-1) \left[ \log \left( \frac{\beta}{2 \pi} \right) - \frac{i \pi}{2}\right] -2 N \log N+\pi i N  \, .
\ee

The high-temperature limit of the index, at finite $N$, may then be written as
\bea
 \label{KW:index:final:hight}
 \log Z & = - \frac{1}{\beta} \bigg\{ \sum_{i \neq j}^{N} \left[ F' \left( u_i - u_j \right) + F' \left( \tu_i -\tu_j \right) \right]
 + \frac{2 \pi^2}{3} (N-1)
 \\ & + \sum_{i,j=1}^{N} \sum_{ \substack{I = 1,2: + \\ I = 3, 4: -}} (\fn_I - 1) F' \left[ \pm \left( u_i - \tilde u_j \pm \Delta_I \right) \right] \bigg\}
 \\ & - 2 N \log N + \pi i ( N + 1 ) \, .
\eea
Plugging the solution \eqref{KW:sol:final} to the BAEs back into the index \eqref{KW:index:final:hight} we find
\bea
\label{KW:index:final:hight:bethe:2d central charge}
\log Z = - \frac{N^2}{2 \beta} \sum_{ \substack{a<b \\ (\neq c)}} \Delta_a \Delta_b \fn_c  + \frac{2 \pi^2}{3 \beta} - 2 N \log N + \pi i ( N + 1 ) \, .
\eea
As in the case of $\cN=4$ SYM we can also write, to leading order in $1 / \beta$,
\be
\label{conifold:index:final:hight:bethe2}
%\boxed{
 \log Z = \frac{\pi^2}{6 \beta} c_{l} \left(\Delta_I , \fn_I\right)
 = - \frac{16 \pi^3}{27 \beta}  \sum_{I=1}^{4} \fn_I \frac{\partial a (\Delta_I)}{\partial \Delta_I} \, ,
%}
\ee
where the second equality is written assuming that $N$ is large.
Here, $c_l$ is the left-moving central charge of the 2d $\cN = (0, 2)$ SCFT obtained by the twisted compactification on $S^2$.
This is related to the trial right-moving central charge $c_r$ by the gravitational anomaly, \ie\;$c_l = c_r - k$.
The central charge $c_r$ takes contribution from the 2d massless fermions, the gauginos
and the zero-modes of the chiral fields (the difference between the number of modes of opposite chirality being  $\fn_I-1$) \cite{Benini:2012cz,Benini:2013cda,Benini:2015bwz},
\bea
 \label{c2d:anomalyconifold}
 c_{r} \left( \Delta_I , \fn_I \right) = - 3 \Tr  \gamma_3 R^2 \left( \Delta_I \right)
 &= - 3 \left[ 2 \left( N^2 -1 \right) + N^2 \sum_{I = 1}^{4} \left( \fn_I - 1 \right) \left( \frac{\Delta_I}{\pi} - 1 \right)^2 \right] \, ,
\eea
while the gravitational anomaly $k$ reads
\be
 k = - \Tr  \gamma_3 = - 2 \left( N^2 -1 \right) - N^2 \sum_{I = 1}^{4} \left( \fn_I - 1 \right) = 2 \, .
\ee

The extremization of $c_{r} \left( \Delta_I , \fn_I \right)$ with respect to the $\Delta_I$ reproduces the exact central charge of the 2d CFT  \cite{Benini:2012cz,Benini:2013cda}. Notice that all the non-anomalous symmetries, including the baryonic one, enter in
the formula \eqref{c2d:anomalyconifold}, which depends on three independent fluxes and three independent fugacities. As pointed out in
 \cite{Benini:2015bwz}, the inclusion of baryonic charges is crucial when performing $c$-extremization. 

\section{High-temperature limit of the index for a generic theory}                                         
\label{high-temp limit of the index}

We can easily generalize the previous discussion to the case of general four-dimensional $\cN = 1$ SCFTs.
Our goal is to compute the partition function of $\cN = 1$ gauge theories
on $S^2 \times T^2$ with  a partial topological $A$-twist along $S^2$.
We identify, as before, the modulus of the torus with the \emph{fictitious} inverse temperature $\beta$, and we are interested in the \emph{high-temperature} limit $(\beta \to 0)$ of the index.
As we take $\beta$ to zero, we can use the asymptotic expansions \eqref{dedekind:hight:S}
and \eqref{theta:hight:S} for the elliptic functions appearing in the supersymmetric path integral \eqref{path integral index}. 
We focus on quiver gauge theories with bi-fundamental and adjoint chiral multiplets and a number $|G|$ of $\SU(N)^{(a)}$ gauge groups.
Eigenvalues $u_i^{(a)}$ and  gauge magnetic fluxes $\fm_i^{(a)}$ have to satisfy the tracelessness condition, \ie\;
\bea
\label{tracelessness condition}
\sum_{i = 1}^{N} u_i^{(a)} = 0 \, , \qquad \qquad \sum_{i = 1}^{N} \fm_i^{(a)} = 0 \, .
\eea
The magnetic fluxes and the chemical potentials for the global symmetries of the theory fulfill the constraints \eqref{supconstraints} and \eqref{supconstraints2}.
We also assume that $0<\Delta_I<2 \pi$. 

As in the previous examples, the solution to the BAEs is given by
\be
\label{BAEs:sol:SU(N)}
u_i^{(a)} = \cO \left( \beta \right) \, , \qquad \forall \quad i, a \, ,
\ee
and exists (up to discrete symmetries) only for  $\sum_{I \in W} \Delta_I  = 2 \pi$, for each monomial term $W$ in the superpotential, as we checked in many examples.
Due to this constraint, $\Delta_I / \pi$ behaves at all effects like a trial R-symmetry of the theory.

\subsection{Bethe potential at high temperature}
\label{Bethe potential at high-temp}

In this section we give the general rules for constructing the \emph{high-temperature} ``on-shell" Bethe potential for $\cN = 1$ quiver gauge theories to leading order in $1 / \beta$:
\begin{enumerate}
 \item A bi-fundamental field with chemical potential $\Delta_{(a,b)}$ transforming in the $({\bf N},\overline{\bf N})$ representation of $\SU(N)_a \times \SU(N)_b$, contributes
 \begin{equation}
 \label{Bethe potential bi-fundamental}
  \frac{i N^2}{\beta} F \left( \Delta_{(a,b)} \right) \, ,
 \end{equation}
 where the function $F$ is defined in (\ref{F:function}).
 \item An adjoint field with chemical potential $\Delta_{(a,a)}$ contributes
 \begin{equation}
 \label{Bethe potential  adjoint}
 \frac{i \left( N^2 - 1 \right)}{\beta} F \left( \Delta_{(a,a)} \right)  \, .
 \end{equation}
\end{enumerate}

\subsection{The topologically twisted index at high temperature}
\label{The index at high-temp}

Using the dominant solution \eqref{BAEs:sol:SU(N)} to the BAEs we can proceed to compute the topologically twisted index.
Here are the rules for constructing the logarithm of the index at high temperature to leading order in $1 / \beta$:
\begin{enumerate}
 \item For each group $a$, the contribution of the off-diagonal vector multiplet is
 \begin{equation}
  \label{index off-diag vector}
  - \frac{\left(N^2 - 1 \right)}{\beta} \frac{\pi^2}{3} \, .
 \end{equation}
 \item A bi-fundamental field with magnetic flux $\fn_{(a,b)}$ and chemical potential $\Delta_{(a,b)}$ transforming
 in the $({\bf N},\overline{\bf N})$ representation of $\SU(N)_a \times \SU(N)_b$, contributes
 \begin{equation}
 \label{index bi-fundamental}
 - \frac{N^2}{\beta} \left(\fn_{(a,b)} - 1 \right) F' \left( \Delta_{(a,b)} \right) \, .
 \end{equation}
 \item An adjoint field with magnetic flux $\fn_{(a,a)}$ and chemical potential $\Delta_{(a,a)}$, contributes
 \begin{equation}
 \label{index adjoint}
 - \frac{N^2 - 1}{\beta} \left(\fn_{(a,a)} - 1 \right) F' \left( \Delta_{(a,a)} \right) \, .
 \end{equation}    
\end{enumerate}

\subsection{An index theorem for the twisted matrix model}
\label{4d index theorem}

The high-temperature behavior of the index, to leading order in $1 / \beta$ and $N$,
can be captured by a simple universal formula involving the Bethe potential and its derivatives.
Let us recall some of the essential ingredients that we need in the following.

The R-symmetry 't Hooft anomaly of UV four-dimensional $\cN = 1$ SCFTs is given by
\bea\label{tHoof:linear:cubic:anomalies}
\Tr R^{\alpha} (\Delta_I) & = |G| \text{ dim } \SU(N) + \sum _{I} \text{ dim }\fR_I \left( \frac{\Delta_I}{\pi} - 1 \right)^{\alpha} \, ,
\eea
where the trace is taken over all the bi-fundamental fermions and gauginos
and $\text{dim }\fR_I$ is the dimension of the respective matter representation with R-charge $\Delta_I / \pi$.
On the other hand, the trial right-moving central charge of the IR two-dimensional $\cN = (0, 2)$ SCFT on $T^2$ can be computed from the spectrum of massless fermions \cite{Benini:2012cz,Benini:2013cda,Benini:2015bwz}.
These are gauginos with chirality $\gamma_3=1$ for all the gauge groups and  fermionic zero modes for each chiral field,
with a difference between the number  of fermions of opposite chiralities equal to $\fn_I-1$.
The central charge is related by the $\cN = 2$ superconformal algebra to the R-symmetry anomaly \cite{Benini:2012cz,Benini:2013cda},
and is given by
\bea
 \label{c2d:anomaly0}
 c_{r} \left( \Delta_I , \fn_I \right) & = - 3 \Tr  \gamma_3 R^2 \left( \Delta_I \right) \\
 & = - 3 \left[ |G| \text{ dim }\SU(N) + \sum_{I} \text{ dim }\fR_I \left( \fn_I - 1 \right) \left( \frac{\Delta_I}{\pi} - 1 \right)^2 \right]  \, .
\eea
By an explicit calculation we see that Eq.\,\eqref{c2d:anomaly0} can be rewritten as
\bea
\label{QFTrelation}
 c_{r} \left( \Delta_I, \fn_I \right) & = - 3 \Tr R^3 \left( \Delta_I \right) - \pi \sum_{I} \left[ \left( \fn_I - \frac{\Delta_I}{\pi} \right) \frac{\partial \Tr R^3 \left( \Delta_I \right)}{\partial \Delta_I} \right]
% \\ & = - 3 \left[ |G| \text{ dim }\SU(N) + \sum_{I} \text{ dim }\fR_I \left( \fn_I - 1 \right) \left( \frac{\Delta_I}{\pi} - 1 \right)^2 \right] \\
% & = - 3 \Tr  \gamma_3 R^2 \left( \Delta_I \right)
\, ,
\eea
where we used the relation \eqref{tHoof:linear:cubic:anomalies}.\footnote{Notice that, in evaluating the right hand side of  \eqref{QFTrelation}, we can consider all the $\Delta_I$  as independent variables and impose the constraints $\sum_{I \in W} \Delta_I  = 2 \pi$ only after differentiation. This is due to the form of the differential operator in \eqref{QFTrelation} and the constraints $\sum_{I \in W} \fn_I  = 2$.}
Moreover, the trial left-moving central charge of the 2d $\cN = (0, 2)$ theory reads
\be
\label{c2d:left:right:gr}
c_l = c_r - k \, , 
\ee
where $k$ is the gravitational anomaly and is given by
\be
\label{gr:anomaly:theorem}
 k = - \Tr \gamma_3 = - |G| \text{ dim }\SU(N) - \sum_{I} \text{ dim }\fR_I \left( \fn_I - 1 \right) \, .
\ee

For theories of D3-branes with an AdS dual, to leading order in $N$, the linear R-symmetry 't Hooft anomaly of the 4d theory vanishes, \ie\;$\Tr R= \cO(1)$ and $a=c$ \cite{Henningson:1998gx}.
Using the parameterization of a trial R-symmetry in terms of $\Delta_I / \pi$,  this is equivalent to
\be
\label{index theorem:constraint0}
\pi |G| + \sum_{I} \left( \Delta_I - \pi \right) = 0 \, ,
\ee
where the sum is taken over all matter fields (bi-fundamental and adjoint) in the quiver.
Similarly,  we have
\be
\label{index theorem:constraint1}
|G| + \sum_{I} \left( \fn_I - 1 \right) = 0 \, .
\ee
This is simply $k = - \Tr \gamma_3 = \cO(1)$, to leading order in $N$.

The {\it index theorem} can be expressed as:

\begin{theorem}
\label{theorem:1}
The topologically twisted index of any $\cN = 1$ $\SU(N)$ quiver gauge theory placed on  $S^2 \times T^2$ to leading order in $1/\beta$ is given  by
\be
 \label{index theorem:2d central charge}
 \log Z \left( \Delta_I, \fn_I \right) = \frac{\pi^2}{6 \beta} c_{l} \left( \Delta_I, \fn_I \right) \, ,
\ee
which is Cardy's universal formula for the asymptotic density of states in a CFT$_2$ \cite{Cardy:1986ie}. We can write the  extremal value of  the Bethe potential $\overline \cV \left( \Delta_I \right)$   as
\bea
 \label{index theorem: Bethe potential:finiteN}
 \overline\cV \left( \Delta_I \right) \equiv - i \cV \left( \Delta_I \right) \big|_{\text{BAEs}}
 & = \frac{\pi^3}{6 \beta} \left[ \Tr R^3 \left( \Delta_I \right) - \Tr R \left( \Delta_I \right) \right] \\
 & =  \frac{16 \pi^3}{27 \beta} \left[ 3 c \left( \Delta_I \right) - 2 a ( \Delta_I) \right] \, .
\eea
For theories of D3-branes at large $N$, the index can be recast as 
\bea
 \label{index theorem:attractor}
 \log Z \left( \Delta_I, \fn_I \right) & = - \frac{3}{\pi} \overline\cV \left( \Delta_I \right) - \sum_{I} \left[ \left( \fn_I - \frac{\Delta_I}{\pi} \right) \frac{\partial \overline\cV \left( \Delta_I \right)}{\partial \Delta_I} \right]
 \\ & = \frac{\pi^2}{6 \beta} c_{r} \left( \Delta_I, \fn_I \right) \, ,
\eea
where $\overline \cV \left( \Delta_I \right)$ reads
\bea
 \label{index theorem: Bethe potential:largeN}
 \overline\cV \left( \Delta_I \right) \equiv - i \cV \left( \Delta_I \right) \big|_{\text{BAEs}} = \frac{16 \pi^3}{27 \beta} a ( \Delta_I) \, .
\eea
\end{theorem}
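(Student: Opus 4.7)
The plan is to assemble the index and the Bethe potential in the high-temperature limit using the universal rules collected in Sections~\ref{Bethe potential at high-temp} and~\ref{The index at high-temp}, and then to match the resulting polynomials in $\Delta_I$ and $\fn_I$ with the four-dimensional R-symmetry 't~Hooft anomalies~\eqref{tHoof:linear:cubic:anomalies} and the two-dimensional trial central charges~\eqref{c2d:anomaly0}--\eqref{gr:anomaly:theorem}. The whole argument is modeled on the two worked examples already carried out in detail: $\cN=4$ SYM in Section~\ref{SYM} and Klebanov-Witten in Section~\ref{Klebanov-Witten}. So I would first invoke the dominant BAE solution $u_i^{(a)} = \cO(\beta)$ of~\eqref{BAEs:sol:SU(N)}, which reduces the Bethe potential contributions to evaluations of the cubic polynomial $F$ at the $\Delta_I$, and the index to evaluations of $F'$ weighted by $(\fn_I - 1)$.

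Next I would evaluate $\overline\cV(\Delta_I)$ by summing rules~\eqref{Bethe potential bi-fundamental} and~\eqref{Bethe potential adjoint} over the quiver. Under the superpotential constraint $\sum_{I\in W}\Delta_I = 2\pi$, the cubic $F(\Delta)$ collapses (up to a term linear in $\Delta_I$) onto $\tfrac{1}{6}\Delta_I^3$ shifted so as to reproduce $(\Delta_I/\pi - 1)^3$. Collecting the coefficients produces precisely $\sum_I \dim\fR_I\,(\Delta_I/\pi - 1)^\alpha + |G|\dim\SU(N)$ for $\alpha = 1,3$, which via~\eqref{tHoof:linear:cubic:anomalies} is $\Tr R^3 - \Tr R$ with the correct prefactor, yielding~\eqref{index theorem: Bethe potential:finiteN}; the second equality follows from the Anselmi-Freedman-Johansen-Grisaru formulas~\eqref{generalac}. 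An identical combinatorial assembly of~\eqref{index off-diag vector}--\eqref{index adjoint} using $F'(\Delta) = \tfrac12(\Delta - \pi)^2 + \text{const}$ matches \eqref{c2d:anomaly0} term by term, establishing \eqref{index theorem:2d central charge}. The contribution $-|G|(N^2-1)\pi^2/(3\beta)$ of the off-diagonal vectors combines with the matter pieces to produce exactly the gauginos-plus-zero-modes counting that defines $c_l$.

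For the large-$N$ D3-brane statement, the constraints~\eqref{index theorem:constraint0} and~\eqref{index theorem:constraint1} (equivalently $\Tr R = \cO(1)$ and $k = \cO(1)$) allow me to drop the linear anomaly in $\overline\cV$, giving~\eqref{index theorem: Bethe potential:largeN}, and to identify $c_l$ with $c_r$ at leading order. The attractor-type identity~\eqref{index theorem:attractor} is then the algebraic consequence of applying the differential operator $-3/\pi - \sum_I(\fn_I - \Delta_I/\pi)\partial_{\Delta_I}$ to $\overline\cV \propto \Tr R^3$ and using the decomposition $(\fn_I - \Delta_I/\pi) = (\fn_I - 1) - (\Delta_I/\pi - 1)$ together with homogeneity of $\Tr R^3$ as a cubic in $(\Delta_I/\pi - 1)$; the cubic terms cancel and one is left with $\sum_I \dim\fR_I (\fn_I - 1)(\Delta_I/\pi - 1)^2$, which is $-c_r/3$. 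I expect the principal obstacle to be a clean derivation of the field-theoretic identity~\eqref{QFTrelation}: one must justify that the constraints $\sum_{I\in W}\Delta_I = 2\pi$ can be imposed only \emph{after} differentiation, which as the footnote points out is possible precisely because the compensating constraint $\sum_{I\in W}\fn_I = 2$ makes the differential operator preserve the constraint locus. A careful Leibniz expansion monomial by monomial in the superpotential makes this explicit, and once \eqref{QFTrelation} is in hand the rest of Theorem~\ref{theorem:1} assembles with no further work.
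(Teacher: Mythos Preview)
Your proposal is correct and tracks the paper's proof closely for \eqref{index theorem: Bethe potential:finiteN} and \eqref{index theorem:2d central charge}: both you and the paper reduce $F(\Delta)$ and $F'(\Delta)$ to polynomials in $(\Delta_I-\pi)$ and read off $\Tr R^3-\Tr R$ and $c_l$ directly from the rules \eqref{Bethe potential bi-fundamental}--\eqref{index adjoint}.

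For the attractor identity \eqref{index theorem:attractor} your route genuinely differs. The paper promotes the explicit factors of $\pi$ in $F$ to a formal variable $\bm\pi$, so that $\overline\cV(\Delta_I,\bm\pi)$ is homogeneous of degree $3$ in $(\Delta_I,\bm\pi)$; Euler's identity then lets one trade $\partial\overline\cV/\partial\bm\pi$ for $\Delta_I$-derivatives, and a direct comparison of $\partial F/\partial\Delta_I$ and $\partial F/\partial\bm\pi$ with $F'$ reproduces the index rules \eqref{index off-diag vector}--\eqref{index adjoint} term by term, invoking the constraint \eqref{index theorem:constraint0} only to generate the vector contribution. You instead establish $\overline\cV\propto\Tr R^3$ first and then verify \eqref{QFTrelation} by expanding in $(\Delta_I/\pi-1)$. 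Both arguments are short and valid; the paper's $\bm\pi$-trick is a tidy structural device that never leaves the level of the building-block functions, while your approach is a more direct anomaly computation that makes the passage through \eqref{QFTrelation} explicit. One small caveat in your sketch: $\Tr R^3$ is \emph{not} homogeneous in $(\Delta_I/\pi-1)$ because of the constant gaugino term $|G|\dim\SU(N)$; after the cubic cancellation you describe, that constant survives under $-\tfrac{3}{\pi}\overline\cV$ and supplies precisely the gaugino piece of $c_r$ in \eqref{c2d:anomaly0}, so be sure to carry it along rather than claiming the residue is only $\sum_I\dim\fR_I(\fn_I-1)(\Delta_I/\pi-1)^2$.
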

\begin{proof} Observe first that again we can consider all the $\Delta_I$ in \eqref{index theorem:attractor} as independent variables and impose the constraints $\sum_{I \in W} \Delta_I  = 2 \pi$ only after differentiation. This is due to the form of the differential operator in \eqref{index theorem:attractor} and $\sum_{I \in W} \fn_I  = 2$.
To prove the first equality in \eqref{index theorem:attractor}, we promote the explicit factors of $\pi$, appearing in \eqref{Bethe potential bi-fundamental} and \eqref{Bethe potential adjoint}, to a formal variable $\bm{\pi}$.
Notice that the ``on-shell'' Bethe potential $\wb{\mathcal{V}}$, at large $N$, is a homogeneous function of $\Delta_I$ and $\bm{\pi}$, \ie\;
\begin{equation}
  \overline{\mathcal{V}}(\lambda \Delta_I, \lambda \bm{\pi}) = \lambda^3 \, \wb{\mathcal{V}}(\Delta_I, \bm{\pi}) \, .
\end{equation}
Hence,
\begin{equation}\label{hom}
 \frac{\partial \wb{\mathcal{V}}(\Delta_I, \bm{\pi})}{\partial \bm{\pi}} =
 \frac{1}{\bm{\pi}} \left[ 3 \, \wb{\mathcal{V}}(\Delta_I) -\sum_I  \Delta_I \frac{\partial \wb{\mathcal{V}}(\Delta_I)}{\partial \Delta_I} \right]\, .
\end{equation}
Now, we consider a generic quiver gauge theory with matters in bi-fundamental and adjoint representations of the gauge group.
They contribute to the Bethe potential $\overline \cV (\Delta_I, \bm{\pi})$ as written in \eqref{Bethe potential bi-fundamental} and \eqref{Bethe potential adjoint}, respectively.
Let us calculate the derivative of $\overline \cV (\Delta_I, \bm{\pi})$ with respect to $\Delta_{I}$:
 \be
 \label{proof bf:Delta}
 - \sum_{I} \fn_I \frac{\partial \overline\cV(\Delta_I, \bm{\pi})}{\partial \Delta_I} =
 - \frac{N^2}{\beta} \sum_{I} \fn_I\; F' \left( \Delta_{I} \right) \, .
 \ee
Next, we take the derivative of the Bethe potential with respect to $\bm{\pi}$:
\begin{align}
 \label{proof bf:pi}
 - \sum_{I} \frac{\partial \overline \cV(\Delta_I, \bm{\pi})}{\partial \bm{\pi}} =
 \frac{N^2}{\beta} \sum_{I} \; F' \left( \Delta_{I} \right)
 - \frac{N^2}{\beta} \sum_{I} \left( \frac{\bm{\pi}^2}{3} - \frac{\bm{\pi}}{3} \Delta_{I} \right) \, .
\end{align}
Using \eqref{hom} and combining \eqref{proof bf:Delta} with the first term of \eqref{proof bf:pi} as in the right hand side of Eq.\,\eqref{index theorem:attractor}, 
we obtain the contribution of matter fields \eqref{index bi-fundamental} and \eqref{index adjoint} to the index.
The contribution of the second term in \eqref{proof bf:pi} to  Eq.\,\eqref{index theorem:attractor} can be written as
\be
 - \frac{N^2}{\beta} \frac{\bm{\pi}}{3} \sum_I \left( \bm{\pi} - \Delta_I \right) =
 - \frac{N^2}{\beta} \frac{\bm{\pi}^2}{3} |G| \, ,
\ee
where we used the constraint \eqref{index theorem:constraint0}.
This is precisely the contribution of the off-diagonal vector multiplets \eqref{index off-diag vector} to the index at large $N$.

Parameterizing the trial R-symmetry of an $\cN = 1$ theory in terms of $\Delta_I / \pi$, we can prove \eqref{index theorem: Bethe potential:finiteN}:
\bea
 \overline \cV \left( \Delta_I \right)
 & = \frac{1}{\beta} \sum_{I } \text{ dim } \fR_I \; F \left( \Delta_I \right)
 = \frac{1}{6 \beta} \sum_{I} \text{ dim } \fR_I \left[ \left(\Delta_I - \pi \right)^3 - \pi^2 \left( \Delta_I - \pi \right) \right] \\
 & = \frac{\pi^3}{6 \beta} \left[\sum _{I} \text{ dim } \fR_I \left( \frac{\Delta_I}{\pi} - 1 \right)^3
 - \sum _{I} \text{ dim } \fR_I \left( \frac{\Delta_I}{\pi} - 1 \right)\right] \\
 & = \frac{\pi^3}{6 \beta} \left[ \Tr R^3 \left( \Delta_I \right) - \Tr R \left( \Delta_I \right) \right] \, ,
\eea
which at large $N$, due to \eqref{index theorem:constraint0}, is equal to \eqref{index theorem: Bethe potential:largeN}.

Finally, we need to show that the high-temperature limit of the index is given by the Cardy formula \eqref{index theorem:2d central charge}.
Bi-fundamental and adjoint fields contribute to the index according to \eqref{index bi-fundamental} and \eqref{index adjoint}, respectively.
We thus have
\bea
 \log Z \left( \Delta_I, \fn_I \right) & = - \frac{1}{\beta} \left[ \frac{\pi^2}{3} |G| \text{ dim }\SU(N) + \sum_{I} \text{ dim }\fR_I \; \left( \fn_I - 1 \right) F' \left( \Delta_I \right) \right] \\
 & = - \frac{\pi^2}{6 \beta} \left\{ 2 |G| \text{ dim }\SU(N) + \sum_{I} \text{ dim }\fR_I \; \left( \fn_I - 1 \right) \left[ 3 \left( \frac{\Delta_I}{\pi} - 1 \right)^2 - 1 \right] \right\} \\
 & = \frac{\pi^2}{6 \beta} \left[ c_{r} \left( \Delta_I, \fn_I \right) + \Tr \gamma_3 \right]
 = \frac{\pi^2}{6 \beta} c_l \left( \Delta_I, \fn_I \right) \, ,
\eea
where we used \eqref{c2d:anomaly0} and \eqref{c2d:left:right:gr} in the third and the fourth equality, respectively.
For quiver gauge theories fulfilling the constraint \eqref{index theorem:constraint1} the above formula reduces to the second equality in \eqref{index theorem:attractor} at large $N$. This completes the proof.
\end{proof}

It is worth stressing  that, when using formula \eqref{index theorem:attractor}, the linear  relations among the $\Delta_I$ can be imposed after differentiation.
It is always possible, ignoring some linear relations, to parameterize $\overline \cV (\Delta_I)$ such that it becomes a homogeneous function of degree 3 in the chemical potentials $\Delta_I$ \cite{Benvenuti:2006xg}.
With this parameterization the index theorem becomes
\be
 \log Z \left( \Delta_I, \fn_I \right) = - \sum_{I} \fn_I \frac{\partial \overline\cV \left( \Delta_I \right)}{\partial \Delta_I} \, .
\ee
As we have seen, this is indeed the case for  $\cN = 4$ SYM and the Klebanov-Witten theory.
We note that our result is  very similar to that obtained for the large $N$ limit of the topologically twisted index
of three-dimensional $\cN \geq 2$ Yang-Mills-Chern-Simons-matter theories placed on $S^2 \times S^1$ \cite{Hosseini:2016tor,Hosseini:2016ume}.

\section{Future directions}
\label{discussion}

There are various directions to explore. Let us mention some of them.
\begin{enumerate}
 \item  We can refine the index by turning on angular momentum along the two-dimensional compact manifold $S^2$ \cite{Benini:2015noa}.
 It would be quite interesting to understand the results for the refined index in the context of  rotating black string solutions in five-dimensional gauged supergravity
 which are still to be found.
 
 \item The critical points of the Bethe potential $\cV (\{u_i^{(a)}\})$ coincide with the Bethe equations for the vacua of a quantum integrable system \cite{Moore:1997dj,Gerasimov:2006zt,Gerasimov:2007ap,Nekrasov:2009ui,Nekrasov:2009rc,Nekrasov:2014xaa}.\footnote{See \cite{Beem:2012mb,Nieri:2015yia} for a discussion about the Bethe equations in the context of factorization and holomorphic blocks.}
 It would be very interesting to understand if the quantum integrability picture can shed new light on the microscopic origin of black holes/strings entropy. 
 
 \item Regular asymptotically AdS$_5\times S^5$ rotating black holes, characterized by three electric charges and two angular momenta, have been found in five-dimensional $\U(1)^3$ gauged supergravity \cite{Gutowski:2004ez,Gutowski:2004yv,Kunduri:2006ek,Chong:2005hr}.
 Our general results for 4d/5d static black holes/strings may suggest new approaches for understanding  the statistical meaning of the Bekenstein-Hawking entropy
 for this class of black holes in terms of states in the dual $\cN = 4$ SYM theory.
\end{enumerate}
We hope to come back to these questions  soon.

\section*{Acknowledgments}

We would like to thank Francesco Benini, Nikolay Bobev, Luca Cassia, Kiril Hristov and Sara Pasquetti  for useful discussions.
We are also grateful to Nikolay Bobev for commenting on a draft of this paper.
AN and AZ are supported by the INFN and the MIUR-FIRB grant RBFR10QS5J ``String Theory and Fundamental Interactions''.
SMH is supported in part by INFN. AZ  thank the Galileo Galilei Institute for Theoretical Physics for the hospitality and the INFN for partial support during the completion of this work.
\begin{appendix}

\section{Elliptic functions and their asymptotics}
\label{Elliptic functions}

The Dedekind eta function is defined by
\be
\eta(q) = \eta(\tau) = q^{\frac{1}{24}} \prod_{n=1}^{\infty} \left(1 - q^{n} \right) \, , \qquad \qquad \im \tau > 0 \, ,
\label{dedekind:eta}
\ee
where $q = e^{2 \pi i \tau}$.
It has the following modular properties
\be\label{modular_eta}
\eta(\tau + 1) = e^{\frac{i \pi }{12}} \, \eta (\tau)\, , \qquad \qquad \eta\left( - \frac{1}{\tau} \right) = \sqrt{- i \tau} \, \eta (\tau) \, .
\ee
The Jacobi theta function reads
\begin{align}
\theta_1 (x;q) = \theta_1 (u;\tau) & = - i q^{\frac18} x^{\frac12} \prod_{k=1}^{\infty} \left( 1- q^k \right) \left( 1-x q^k \right) \left( 1- x^{-1} q^{k-1} \right) \nn \\
& = -i \sum_{n \in \mathbb{Z}} (-1)^n e^{i u \left( n+ \frac12 \right)} e^{\pi i \tau \left( n+ \frac12 \right)^2}\, ,
\label{theta:function}
\end{align}
where $x = e^{i u}$ and $q$ is as before.
The function $\theta_1 (u; \tau)$ has simple zeros in $u$ at $u = 2 \pi \mathbb{Z} + 2 \pi \tau \mathbb{Z}$ and no poles.
Its modular properties are,
\be\label{modular_theta}
\theta_1\left(u;\tau+1\right) = e^{\frac{i \pi}{4}} \, \theta_1\left(u;\tau\right)\, , \qquad \qquad
\theta_1\left( \frac{u}{\tau};-\frac{1}{\tau}\right) = - i \sqrt{- i \tau} \, e^{\frac{i u^2}{4 \pi \tau}} \, \theta_1\left( u; \tau\right) \, .
\ee
We also note the following useful formula,
\be
\label{theta:function:shift}
\theta_1\left( q^m x;q\right) = (-1)^{-m}\,x^{-m} q^{-\frac{m^2}{2}}\theta_1(x;q) \, , \qquad \qquad m \in \bZ \, .
\ee

The asymptotic behavior of the $\eta(q)$ and $\theta_1(x; q)$ as $q \to 1$ can be derived by using their modular properties.
To this purpose, we first need to perform an $S$-transformation, \ie\;$\tau\to -1/\tau$, and then expand the
resulting functions in series of $q$, which is now a small parameter in the $\tau\to i0$ limit. 

Let us start with the Dedekind $\eta$-function.
The action of modular transformation is written in (\ref{modular_eta}). 
We will identify the ``inverse temperature'' $\beta$ with the modular parameter $\tau$ of the torus: $\tau=i\beta/2\pi$.
Then, expanding the $S$-transformed $\eta$-function we get 
\begin{align}
 \label{dedekind:hight:S}
  \log\left[\eta(\tau)\right] & = -\frac{1}{2}\log\left( -i \tau \right) + \log\left[ \eta\left( -\frac{1}{\tau}\right)\right] \nn \\
  & = - \frac{1}{2}\log\left( \frac{\beta}{2\pi}\right)-\frac{\pi^2}{6\beta} + \cO\left( e^{- 1 / \beta} \right) \, .
\end{align}
Similarly, we can consider the asymptotic expansion of the Jacobi $\theta$-function:
\bea
 \log\left[\theta_1(u;\tau) \right] & =
 \frac{i \pi}{2} - \frac{1}{2}\log\left( -i \tau \right) - \frac{i u^2 }{4 \pi  \tau}
 + \log\left[ \theta_1\left(\frac{u}{\tau}; -\frac{1}{\tau}\right)\right] \nn \\
 & = -\frac{\pi^2}{2\beta} - \frac{u^2}{2\beta}-\frac{1}{2} \log\left( \frac{\beta}{2\pi} \right)
 + \log\left[2 \sinh\left(\frac{\pi u}{\beta} \right)\right]
 + \cO \left( e^{- 1 / \beta} \right) \, ,
\eea
Writing $2\sinh\left(\frac{\pi u}{\beta} \right) = e^{\pi u/\beta}\left( 1 - e^{- 2 \pi u/\beta} \right)$,
we have the following expansion
\be
 \log\left[2\sinh\left(\frac{\pi u}{\beta} \right)\right]
 = \frac{\pi}{\beta} u \sign\left[ \re (u) \right]
 - \sum_{k=1}^{\infty} \frac{1}{k} e^{- \frac{2 k \pi}{\beta} u \sign\left[ \re (u) \right]} \, .
\ee
Putting all pieces together, we find
\bea
\label{theta:hight:S}
\log\left[ \theta_1(u;\tau) \right] = -\frac{\pi^2}{2\beta} - \frac{u^2}{2\beta}-\frac{1}{2} \log\left( \frac{\beta}{2\pi} \right)
+ \frac{\pi}{\beta} u \sign\left[ \re(u) \right]
+ \cO \left( e^{- 1 / \beta} \right) \, .
\eea

\section{Anomaly cancellation}
\label{Anomaly cancellation}

Here we obtain the conditions for which the integrand in \eqref{path integral index} is a well-defined meromorphic function on the torus.
To this aim the one-loop contributions must be invariant under the transformation $x^{\rho} \to q^{\rho(\gamma)}\, x^{\rho}$ where $\gamma$ live in the co-root lattice $\Gamma_\fh$ of the gauge group. 

The off-diagonal vector multiplets contribute to the index as
\be
Z_{1-\text{loop}}^{\text{gauge, off}}=(-1)^{\sum_{\alpha>0}\alpha(\fm)}\prod_{\alpha\in G}\left[ \frac{\theta_1(x^\alpha ; q)}{i \eta(q)} \right] \, .
\ee
Applying $x^{\rho} \to q^{\rho(\gamma)}\, x^{\rho}$ and using Eq.\,\eqref{theta:function:shift} we find
\be
\label{elliptic:transform:vec}
Z_{1-\text{loop}}^{\text{gauge, off}} \to Z_{1-\text{loop}}^{\text{gauge, off}} \prod_{\alpha \in G} (-1)^{-\alpha(\gamma)} \, e^{- i \pi \tau \alpha(\gamma)^2} \, e^{-i \alpha(u) \alpha(\gamma)} \, .
\ee
The one-loop contribution of a chiral multiplet is given by
\be
Z_{1-\text{loop}}^{\text{chiral}} = \prod_{\rho_I \in \fR_{I}} \bigg[ \frac{i \eta(q)}{\theta_1(x^{\rho_I} y^{\nu_I} ; q)} \bigg]^{B} \, , \qquad \qquad B = \rho_I(\fm)- \fn_I + 1 \, ,
\ee
where $\nu_I$ is the weight of the multiplet under the flavor symmetry group.
It transforms as
\be
\label{elliptic:transform:chiral}
Z_{1-\text{loop}}^{\text{chiral}} \to Z_{1-\text{loop}}^{\text{chiral}} \prod_{\rho_I \in \fR_{I}} (-1)^{\rho_I(\gamma) B} \, e^{ i \pi \tau \rho_I(\gamma)^2 B} \, e^{i \rho_I(u) \rho_I(\gamma) B} \, e^{ i \rho_I(\gamma) \nu_I(\Delta) B} \, .
\ee
Putting everything together, the total prefactor in the integrand vanishes if we demand the following anomaly cancellation conditions:
\begin{align}
&\sum_{\alpha\in G} \alpha(\gamma)^2 + \sum_{I} \sum_{\rho_I\in\fR_{I}} \left(\fn_I - 1\right) \rho_I(\gamma)^2 = 0\, ,\hspace{2.25cm} \mbox{$\U(1)_R$-gauge-gauge anomaly}\, ,\\
&\sum_{\alpha\in G} \alpha(\gamma) \alpha(u) + \sum_{I} \sum_{\rho_I\in\fR_{I}} \left(\fn_I - 1\right) \rho_I(\gamma) \rho(u) = 0\, ,\hspace{1.cm} \mbox{$\U(1)_R$-gauge-gauge anomaly}\, ,\\
&\sum_I\sum_{\rho_I \in \fR_{I}} \rho_I(\gamma)^2 \, \rho_I(\fm) = 0\, , \hspace{4.9cm} \mbox{gauge$^3$ anomaly}\, ,\\
&\sum_I\sum_{\rho_I \in \fR_{I}} \rho_I(\gamma) \, \rho(u) \, \rho_I(\fm) = 0\, , \hspace{4.2cm} \mbox{gauge$^3$ anomaly}\, ,\\
&\sum_I\sum_{\rho_I \in \fR_{I}} \rho_I(\gamma) \, \rho_I(\fm) \, \nu_I(\Delta) = 0\, , \hspace{3.95cm} \mbox{gauge-gauge-flavor anomaly}\, ,\\
&\sum_I\sum_{\rho_I \in \fR_{I}} \left(\fn_I - 1\right) \rho_I(\gamma) \, \nu_I(\Delta) = 0\, , \hspace{3.525cm} \mbox{$\U(1)_R$-gauge-flavor anomaly} \, .
\end{align}
The signs cancel out automatically for all D3-brane quivers since the number of arrows entering a node equals the number of arrows leaving it.

\end{appendix}

\bibliographystyle{ytphys}

\bibliography{index}

\end{document}